\title{Dimension Spectra of Lines}
\author{
	Neil Lutz\footnote{Research supported in part by National Science Foundation Grant 1445755.}\\
	Department of Computer Science, Rutgers University\\
	Piscataway, NJ 08854, USA\\
	\texttt{njlutz@rutgers.edu}
	\and
	D. M. Stull\footnote{Research supported in part by National Science Foundation Grants 1247051 and 1545028.}\\
	Department of Computer Science, Iowa State University\\
	Ames, IA 50011, USA\\
	\texttt{dstull@iastate.edu}
}
\newtheorem{thm}{Theorem}
\newtheorem{obs}[thm]{Observation}
\newtheorem{lem}[thm]{Lemma}
\newtheorem{cor}[thm]{Corollary}
\DeclareMathOperator{\Dim}{Dim}
\DeclareMathOperator{\spec}{sp}
\newcommand{\R}{\mathbb{R}}
\newcommand{\N}{\mathbb{N}}
\newcommand{\Q}{\mathbb{Q}}
\newcommand{\ve}{\varepsilon}
\begin{document}
	\maketitle
	
	\begin{abstract}
		This paper investigates the algorithmic dimension spectra of lines in the Euclidean plane. Given any line $L$ with slope $a$ and vertical intercept $b$, the dimension spectrum $\spec(L)$ is the set of all effective Hausdorff dimensions of individual points on $L$.  We draw on Kolmogorov complexity and geometrical arguments to show that if the effective Hausdorff dimension $\dim(a, b)$ is equal to the effective packing dimension $\Dim(a, b)$, then $\spec(L)$ contains a unit interval. We also show that, if the dimension $\dim(a, b)$ is at least one, then $\spec(L)$ is infinite. Together with previous work, this implies that the dimension spectrum of any line is infinite.
	\end{abstract}
	
	\section{Introduction}
	
	Algorithmic dimensions refine notions of algorithmic randomness to quantify the density of algorithmic information of individual points in continuous spaces. The most well-studied algorithmic dimensions for a point $x\in\R^n$ are the \emph{effective Hausdorff dimension}, $\dim(x)$, and its dual, the \emph{effective packing dimension}, $\Dim(x)$~\cite{Lutz03b,AHLM07}. These dimensions are both algorithmically and geometrically meaningful~\cite{DowHir10}. In particular, the quantities $\sup_{x\in E}\dim(x)$ and $\sup_{x\in E}\Dim(x)$ are closely related to classical Hausdorff and packing dimensions of a set $E\subseteq\R^n$~\cite{Hitc05,LutLut17}, and this relationship has been used to prove nontrivial results in classical fractal geometry using algorithmic information theory~\cite{Reim08,LutLut17,LutStu17}.
	
	Given the pointwise nature of effective Hausdorff dimension, it is natural to investigate not only the supremum $\sup_{x\in E}\dim(x)$ but the entire \emph{(effective Hausdorff) dimension spectrum} of a set $E \subseteq \R^n$, i.e., the set
	\[\spec(E)=\{\dim(x):x\in E\}\,.\]
	The dimension spectra of several classes of sets have been previously investigated. Gu, et al. studied the dimension spectra of randomly selected subfractals of self-similar fractals~\cite{GLMM14}. Dougherty, et al. focused on the dimension spectra of random translations of Cantor sets~\cite{DLMT14}. In the context of symbolic dynamics, Westrick has studied the dimension spectra of subshifts~\cite{Westrick14}.
	
	This work concerns the dimension spectra of lines in the Euclidean plane $\R^2$. Given a line $L_{a,b}$ with slope $a$ and vertical intercept $b$, we ask what $\spec(L_{a,b})$ might be. It was shown by Turetsky that, for every $n\geq 2$, the set of all points in $\R^n$ with effective Hausdorff 1 is connected, guaranteeing that $1\in\spec(L_{a,b})$.
	In recent work~\cite{LutStu17}, we showed that the dimension spectrum of a line in $\R^2$ cannot be a singleton. By proving a general lower bound on $\dim(x,ax+b)$, which is presented as Theorem~\ref{thm:boundingMainThm} here, we demonstrated that
	\[\min\{1,\dim(a,b)\}+1\in\spec(L_{a,b})\,.\]
	Together with the fact that $\dim(a,b)=\dim(a,a^2+b)\in\spec(L_{a,b})$ and Turetsky's result, this implies that the dimension spectrum of $L_{a,b}$ contains both endpoints of the unit interval $[\min\{1,\dim(a,b)\},\min\{1,\dim(a,b)\}+1]$.
	
	Here we build on that work with two main theorems on the dimension spectrum of a line. Our first theorem gives conditions under which the entire unit interval must be contained in the spectrum. We refine the techniques of~\cite{LutStu17} to show in our main theorem (Theorem~\ref{thm:spectraMainTheorem}) that, whenever $\dim(a,b)=\Dim(a,b)$, we have
	\[[\min\{1,\dim(a,b)\},\min\{1,\dim(a,b)\}+1]\subseteq\spec(L_{a,b})\,.\]
	Given any value $s\in[0,1]$, we construct, by padding a random binary sequence, a value $x\in\R$ such that $\dim(x, ax + b) = s + \min\{\dim(a, b), 1\}$. Our second main theorem shows that the dimension spectrum $\spec(L_{a,b})$ is infinite for every line such that $\dim(a, b)$ is at least one. Together with Theorem~\ref{thm:boundingMainThm}, this shows that the dimension spectrum of \textit{any} line has infinite cardinality.
	
	We begin by reviewing definitions and properties of algorithmic information in Euclidean spaces in Section~\ref{sec:prelim}. In Section~\ref{sec:approach}, we sketch our technical approach and state our main technical lemmas; their proofs are deferred to the appendix. In Section~\ref{ssec:mainthm} we prove our first main theorem and state our second main theorem, whose proof is deferred to the appendix. We conclude in Section~\ref{sec:conc} with a brief discussion of future directions.
	
	\section{Preliminaries}\label{sec:prelim}
	
	\subsection{Kolmogorov Complexity in Discrete Domains}
	The \emph{conditional Kolmogorov complexity} of binary string $\sigma\in\{0,1\}^*$ given a binary string $\tau\in\{0,1\}^*$ is the length of the shortest program $\pi$ that will output $\sigma$ given $\tau$ as input. Formally, it is
	\[K(\sigma|\tau)=\min_{\pi\in\{0,1\}^*}\left\{\ell(\pi):U(\pi,\tau)=\sigma\right\}\,,\]
	where $U$ is a fixed universal prefix-free Turing machine and $\ell(\pi)$ is the length of $\pi$. Any $\pi$ that achieves this minimum is said to \emph{testify} to, or be a \emph{witness} to, the value $K(\sigma|\tau)$. The \emph{Kolmogorov complexity} of a binary string $\sigma$ is $K(\sigma)=K(\sigma|\lambda)$, where $\lambda$ is the empty string.	These definitions extends naturally to other finite data objects, e.g., vectors in $\Q^n$, via standard binary encodings; see~\cite{LiVit08} for details.
	
	\subsection{Kolmogorov Complexity in Euclidean Spaces}\label{subsec:compEucPrelim}
	The above definitions can also be extended to Euclidean spaces, as we now describe. The \emph{Kolmogorov complexity} of a point $x\in\R^m$ at \emph{precision} $r\in\N$ is the length of the shortest program $\pi$ that outputs a \emph{precision}-$r$ rational estimate for $x$. Formally, it is
	\[K_r(x)=\min\left\{K(p)\,:\,p\in B_{2^{-r}}(x)\cap\Q^m\right\}\,,\]
	where $B_{\ve}(x)$ denotes the open ball of radius $\ve$ centered on $x$. The \emph{conditional Kolmogorov complexity of $x$ at precision $r$ given $y\in\R^n$ at precision $s\in\R^n$} is
	\[K_{r,s}(x|y)=\max\big\{\min\{K_r(p|q)\,:\,p\in B_{2^{-r}}(x)\cap\Q^m\}\,:\,q\in B_{2^{-s}}(y)\cap\Q^n\big\}\,.\]
	When the precisions $r$ and $s$ are equal, we abbreviate $K_{r,r}(x|y)$ by $K_r(x|y)$. As the following lemma shows, these quantities obey a chain rule and are only linearly sensitive to their precision parameters.
	
	\begin{lem}[J. Lutz and N. Lutz~\cite{LutLut17}, N. Lutz and Stull~\cite{LutStu17}]\label{lem:chain}
		Let $x \in \R^m$ and $y\in\R^n$. For all $r,s\in\N$ with $r\geq s$,
		\begin{enumerate}
			\item $K_r(x,y)=K_r(x|y)+K_r(y)+O(\log r)$.
			\item $K_r(x)=K_{r,s}(x|x)+K_s(x)+O(\log r)$.
		\end{enumerate}
	\end{lem}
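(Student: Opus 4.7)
The plan is to reduce both identities to the classical discrete chain rule $K(\sigma,\tau)=K(\sigma\mid\tau)+K(\tau)+O(\log K(\sigma,\tau))$, carrying the symmetry-of-information argument through the rational witnesses that realize the various precision-indexed complexities. All constant-factor blow-ups (from the norm-equivalence between $\R^m\times\R^n$ and $\R^{m+n}$, from self-delimiting program concatenations, and from enumerating nearby rationals at precision $r$) will be absorbed into the $O(\log r)$ slack, using the fact that any precision-$r$ rational estimate has Kolmogorov complexity $O(r)$, so $\log K \le O(\log r)$.

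For the upper bound of Clause~1, let $\pi_y$ be an optimal program producing some $q\in B_{2^{-r}}(y)\cap\Q^n$. By the max-min form of the definition, for this specific $q$ there is a program $\pi_{x\mid y}$ of length at most $K_r(x\mid y)$ that on input $q$ outputs some $p\in B_{2^{-r}}(x)\cap\Q^m$. Concatenating $\pi_y$ and $\pi_{x\mid y}$ with an $O(\log r)$-bit self-delimiting header yields a program that outputs the pair $(p,q)$, which lies within $\sqrt{m+n}\cdot 2^{-r}$ of $(x,y)$. Applying the linear-in-$r$ sensitivity of $K_r$ to absorb the constant-factor distance blow-up gives $K_r(x,y)\le K_r(x\mid y)+K_r(y)+O(\log r)$.

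For the lower bound of Clause~1, let $(p_0,q_0)$ be the pair output by an optimal witness for $K_r(x,y)$. Since $q_0\in B_{2^{-r}}(y)\cap\Q^n$, we have $K(q_0)\ge K_r(y)$, and the discrete chain rule gives $K_r(x,y)=K(p_0,q_0)\ge K(p_0\mid q_0)+K(q_0)-O(\log r)$. The step I expect to be most delicate is showing $K(p_0\mid q_0)\ge K_r(x\mid y)-O(\log r)$, because $K_r(x\mid y)$ is defined as a \emph{maximum} over $q\in B_{2^{-r}}(y)\cap\Q^n$ while we only have the particular $q_0$. I would handle this with a robustness argument: any two precision-$r$ rationals in $B_{2^{-r}}(y)$ lie within $2\cdot 2^{-r}$ of each other, so one can be enumerated from the other at cost $O(\log r)$ bits. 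Consequently the inner quantity $\min_{p}K(p\mid q)$ is invariant across choices of $q$ up to $O(\log r)$ slack, so the maximum is attained (modulo $O(\log r)$) at every such $q$, and in particular at $q_0$; combining this with $K(p_0\mid q_0)\ge\min_p K(p\mid q_0)$ and the chain-rule inequality gives the desired lower bound.

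Clause~2 follows the same template, now with the second component replaced by $x$ itself and with two distinct precisions $r\ge s$. For the upper bound, concatenate an optimal $K_s(x)$-witness with a program of length at most $K_{r,s}(x\mid x)$ that refines its precision-$s$ output to a precision-$r$ estimate of $x$. For the lower bound, take an optimal precision-$r$ witness $p$ for $x$, truncate it to a precision-$s$ rational $q\in B_{2^{-s}}(x)\cap\Q^m$ (so $q$ is computable from $p$ and $K(p,q)=K(p)+O(1)$), apply the discrete chain rule to $(p,q)$, and invoke the same robustness argument from Clause~1 over $B_{2^{-s}}(x)\cap\Q^m$ to conclude $K_{r,s}(x\mid x)+K_s(x)\le K_r(x)+O(\log r)$.
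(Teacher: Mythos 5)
This lemma is imported from~\cite{LutLut17,LutStu17}; the paper itself gives no proof, so I can only evaluate your argument on its own terms. Your upper-bound directions (in both clauses) are sound: fix a $q$ that minimizes $K(q)$ over $B_{2^{-r}}(y)\cap\Q^n$, observe that the inner minimum at that $q$ is at most the max defining $K_r(x\mid y)$, and splice programs, absorbing the $\sqrt{2}\,2^{-r}$ ball inflation via Lemma~\ref{lem:sensitivity}. The lower-bound direction, however, has a genuine gap in the ``robustness argument.'' You claim that because any two $q,q'\in B_{2^{-r}}(y)\cap\Q^n$ lie within $2\cdot 2^{-r}$ of each other, ``one can be enumerated from the other at cost $O(\log r)$ bits,'' and hence $\min_p K(p\mid q)$ is $q$-invariant up to $O(\log r)$. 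Neither step holds. The set $B_{2^{-r}}(y)\cap\Q^n$ contains infinitely many rationals of unbounded Kolmogorov complexity, so proximity does not bound $K(q'\mid q)$ at all: specifying \emph{which} nearby rational takes arbitrarily many bits, not $O(\log r)$. Worse, the invariance itself fails. Take $y=0\in\R$ and let $p^\star$ be a precision-$r$ rational near a complex $x$; then $q'=p^\star\cdot 2^{-2r}\in B_{2^{-r}}(y)\cap\Q$ encodes $p^\star$ in its low-order digits, so $\min_p K(p\mid q')=O(\log r)$, while $\min_p K(p\mid 0)\approx K_r(x)$ can be $\Omega(r)$. Since the $\max$ in the definition of $K_r(x\mid y)$ is precisely the device that guards against such ``informative'' approximations of $y$, you cannot argue that the particular $q_0$ extracted from a witness for $K_r(x,y)$ automatically attains that max; indeed $q_0$ can be exactly such a malicious rational (this is consistent with $K(p_0,q_0)=K_r(x,y)$, because the extra information shows up in $K(q_0)\gg K_r(y)$ rather than in the conditional term). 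Your proof only retains the weak bound $K(q_0)\geq K_r(y)$ and then needs $K(p_0\mid q_0)\geq K_r(x\mid y)-O(\log r)$, which is exactly what fails in this scenario.

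A correct proof of the lower bound therefore cannot simply apply symmetry of information to an arbitrary optimal witness $(p_0,q_0)$ and invoke invariance. One must control which approximation of $y$ is used--e.g., by replacing $q_0$ with a canonical coarse rational computable from $q_0$ together with $O(\log r)$ side information, and then arguing separately that passing to this canonical $q$ changes neither $K(q)$ nor the conditional term by more than $O(\log r)$--and the second of these facts requires its own argument; it is not a consequence of the metric proximity you invoke. The same gap propagates to your lower bound for Clause~2, where you ``invoke the same robustness argument.'' I recommend consulting the proof in~\cite{LutLut17} and reworking the lower bounds around a canonical choice of conditioning rational rather than the witness-extracted $q_0$.
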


	\subsection{Effective Hausdorff and Packing Dimensions}
	J. Lutz initiated the study of algorithmic dimensions by effectivizing Hausdorff dimension using betting strategies called~\emph{gales}, which generalize martingales. Subsequently, Athreya, et al., defined effective packing dimension, also using gales~\cite{AHLM07}. Mayordomo showed that effective Hausdorff dimension can be characterized using Kolmogorov complexity~\cite{Mayo02}, and Mayordomo and J. Lutz showed that effective packing dimension can also be characterized in this way~\cite{LutMay08}. In this paper, we use these characterizations as definitions.
	The \emph{effective Hausdorff dimension} and \emph{effective packing dimension} of a point $x\in\R^n$ are
	\[\dim(x)=\liminf_{r\to\infty}\frac{K_r(x)}{r}\quad\text{and}\quad\Dim(x) = \limsup_{r\to\infty}\frac{K_r(x)}{r}\,.\]
	Intuitively, these dimensions measure the density of algorithmic information in the point $x$. Guided by the information-theoretic nature of these characterizations, J. Lutz and N. Lutz~\cite{LutLut17} defined the \emph{lower} and \emph{upper conditional dimension} of $x\in\R^m$ given $y\in\R^n$ as
	\[\dim(x|y)=\liminf_{r\to\infty}\frac{K_r(x|y)}{r}\quad\text{and}\quad\Dim(x|y) = \limsup_{r\to\infty}\frac{K_r(x|y)}{r}\,.\]
	
	\subsection{Relative Complexity and Dimensions}
	By letting the underlying fixed prefix-free Turing machine $U$ be a universal \emph{oracle} machine, 
	we may \emph{relativize} the definition in this section to an arbitrary oracle set $A \subseteq \N$. The definitions of $K^A(\sigma|\tau)$, $K^A(\sigma)$, $K^A_r(x)$, $K^A_r(x|y)$, $\dim^A(x)$, $\Dim^A(x)$  $\dim^A(x|y)$, and $\Dim^A(x|y)$ are then all identical to their unrelativized versions, except that $U$ is given oracle access to $A$.
	
	We will frequently consider the complexity of a point $x \in \R^n$ \emph{relative to a point} $y \in \R^m$, i.e., relative to a set $A_y$ that encodes the binary expansion of $y$ is a standard way. We then write $K^y_r(x)$ for $K^{A_y}_r(x)$. J. Lutz and N. Lutz showed that $K_r^y(x)\leq K_{r,t}(x|y)+K(t)+O(1)$ \cite{LutLut17}.
	
	\section{Background and Approach}\label{sec:approach}
	
	In this section we describe the basic ideas behind our investigation of dimension spectra of lines. We briefly discuss some of our earlier work on this subject, and we present two technical lemmas needed for the proof our main theorems.
	
	The dimension of a point on a line in $\R^2$ has the following trivial bound.
	\begin{obs}\label{obs:upper}
		For all $a,b,x\in\R$, $\dim(x,ax+b)\leq\dim(x,a,b)$.
	\end{obs}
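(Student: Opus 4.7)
The plan is to deduce the inequality directly from the fact that the map $(x,a,b)\mapsto(x,ax+b)$ is computable and Lipschitz on any bounded neighborhood. Concretely, I would prove the pointwise bound $K_r(x,ax+b)\leq K_{r+c}(x,a,b)+O(1)$, where $c=\lceil\log(|a|+|x|+2)\rceil$ is a constant depending only on the fixed reals $a,x$, and then divide by $r$ and pass to liminfs.

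For the complexity bound, let $\pi$ witness $K_{r+c}(x,a,b)$, so running $\pi$ yields a rational triple $(p,q,s)$ with $|p-x|,|q-a|,|s-b|<2^{-(r+c)}$. A constant-size wrapper program can simulate $\pi$ and output $(p,qp+s)\in\Q^2$, and the triangle inequality gives
\[|qp+s-(ax+b)|\leq|q|\,|p-x|+|x|\,|q-a|+|s-b|\leq(|a|+|x|+2)\cdot 2^{-(r+c)}\leq 2^{-r}.\]
Hence $(p,qp+s)\in B_{2^{-r}}(x,ax+b)\cap\Q^2$, establishing the claimed inequality.

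To finish, divide by $r$ and note that $c$ is independent of $r$. Writing $K_{r+c}(x,a,b)/r=\tfrac{r+c}{r}\cdot K_{r+c}(x,a,b)/(r+c)$ and using $(r+c)/r\to 1$, we obtain $\liminf_{r\to\infty}K_{r+c}(x,a,b)/r=\dim(x,a,b)$, while the $O(1)/r$ error vanishes. Taking $\liminf_{r\to\infty}$ on both sides of the complexity inequality then yields $\dim(x,ax+b)\leq\dim(x,a,b)$. There is no real obstacle here; the observation is essentially the pointwise Lipschitz-stability of effective dimension, and the only bookkeeping concern is keeping the precision shift $c$ independent of $r$, which is automatic since $a$ and $x$ are held fixed.
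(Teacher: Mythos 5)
Your approach is correct in spirit and is essentially the argument the paper has in mind: the observation is stated without proof as a ``trivial bound,'' and it is a direct instance of the general fact that a computable locally-Lipschitz map (here $(x,a,b)\mapsto(x,ax+b)$) does not increase effective Hausdorff dimension --- a principle the paper invokes explicitly in the proof of Theorem~\ref{thm:spectraMainTheorem} (``preservation of effective dimensions under computable bi-Lipschitz functions''). The mechanism you chose, proving $K_r(x,ax+b)\leq K_{r+c}(x,a,b)+O(1)$ with $c$ independent of $r$ and then dividing by $r$ and taking $\liminf$, is exactly right.

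One small slip: you verify only the second-coordinate bound $|qp+s-(ax+b)|\leq 2^{-r}$, but membership in $B_{2^{-r}}(x,ax+b)$ requires the Euclidean norm
\[
\bigl(|p-x|^2+|qp+s-(ax+b)|^2\bigr)^{1/2}<2^{-r},
\]
and with your choice of $c$ this quantity can be as large as roughly $\tfrac{\sqrt{5}}{2}\cdot 2^{-r}>2^{-r}$. This is cosmetic: increase $c$ by $1$ (forcing each coordinate error below $2^{-(r+1)}$), or simply note that such constant precision shifts are already absorbed by Lemma~\ref{lem:sensitivity}, and the argument goes through unchanged.
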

	In this work, our goal is to find values of $x$ for which the approximate converse
	\begin{equation}\label{eq:lower}
	\dim(x,ax+b)\geq\dim^{a,b}(x)+\dim(a,b)
	\end{equation}
	holds. There exist oracles, at least, relative to which (\ref{eq:lower}) does not always hold. This follows from the point-to-set principle of J. Lutz and N. Lutz~\cite{LutLut17} and the existence of Furstenberg sets with parameter $\alpha$ and Hausdorff dimension less than $1+\alpha$ (attributed by Wolff~\cite{Wolf99} to Furstenberg and Katznelson ``in all probability''). The argument is simple and very similar to our proof in~\cite{LutStu17} of a lower bound on the dimension of generalized Furstenberg sets.
	
	Specifically, for every $s\in[0,1]$, we want to find an $x$ of effective Hausdorff dimension $s$ such that (\ref{eq:lower}) holds. Note that equality in Observation~\ref{obs:upper} implies (\ref{eq:lower}). 
	\begin{obs}
		Suppose $ax+b=ux+v$ and $u\neq a$. Then \[\dim(u,v)\geq\dim^{a,b}(u,v)\geq\dim^{a,b}\left(\frac{b-v}{u-a}\right)=\dim^{a,b}(x) \,.\]
	\end{obs}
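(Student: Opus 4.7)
The plan is to prove the chain left-to-right, with all three steps following from standard facts about relativized Kolmogorov complexity plus one algebraic identity.

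For the first inequality $\dim(u,v)\geq \dim^{a,b}(u,v)$, I would simply observe that any program witnessing a precision-$r$ rational estimate for $(u,v)$ without an oracle continues to work when the machine is given oracle access to an encoding of $(a,b)$; thus $K_r^{a,b}(u,v)\leq K_r(u,v)+O(1)$, and dividing by $r$ and taking $\liminf_{r\to\infty}$ yields the claim.

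For the middle inequality $\dim^{a,b}(u,v)\geq \dim^{a,b}\!\left(\frac{b-v}{u-a}\right)$, the key point is that the map $(u',v')\mapsto (b-v')/(u'-a)$ is computable from the oracle $(a,b)$, and, because we are in the case $u\neq a$, it is Lipschitz on a fixed neighborhood of $(u,v)$ on which $|u'-a|$ is bounded below by some positive constant. Given a precision-$(r+c)$ rational approximation to $(u,v)$ relative to the oracle $(a,b)$ together with a sufficiently precise oracle query for $(a,b)$, one can output a precision-$r$ rational approximation to $(b-v)/(u-a)$, yielding $K^{a,b}_r\!\left(\frac{b-v}{u-a}\right)\leq K^{a,b}_{r+O(1)}(u,v)+O(\log r)$, which after dividing by $r$ and taking $\liminf$ gives the dimension inequality. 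The final equality is then the algebraic identity obtained by rearranging the hypothesis $ax+b=ux+v$ into $x(u-a)=b-v$ and dividing by the nonzero quantity $u-a$.

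Essentially no step is hard; the only place where care is required is in the precision bookkeeping for the division map, where we must ensure the precision loss is an additive constant. This is exactly where the hypothesis $u\neq a$ is used, since otherwise the Lipschitz constant of the map would blow up and no uniform precision bound would be available. Everything else is immediate from the definitions.
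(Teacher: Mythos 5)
The paper states this as an \emph{observation} and gives no explicit proof; it is treated as an immediate consequence of standard facts. Your three-step argument is exactly the reasoning the paper leaves implicit, and it is correct.

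To confirm the details: the first inequality follows because giving the universal machine oracle access can only shorten witnessing programs, so $K_r^{a,b}(u,v)\leq K_r(u,v)+O(1)$ for all $r$, and the inequality passes through the $\liminf$. The middle inequality is an instance of the general principle that effective Hausdorff dimension cannot increase under maps that are computable (relative to the same oracle) and locally Lipschitz: here the map $(u',v')\mapsto(b-v')/(u'-a)$ is $(a,b)$-computable and, precisely because $u\neq a$, is Lipschitz on a neighborhood of $(u,v)$ where the denominator stays bounded away from zero, so a precision-$(r+c)$ estimate for $(u,v)$ relative to $(a,b)$ yields a precision-$r$ estimate for its image. The equality is just the algebraic rearrangement $ax+b=ux+v \iff x(u-a)=b-v$, valid since $u\neq a$, so the two quantities whose $(a,b)$-relative dimensions are compared are the same real number. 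Your remark about where the hypothesis $u\neq a$ is actually used is also exactly right: it is what keeps the Lipschitz constant finite and the precision loss additive.

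Your proof is correct and is essentially the argument the authors intended the reader to supply.
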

	This observation suggests an approach, whenever $\dim^{a,b}(x)>\dim(a,b)$, for showing that $\dim(x,ax+b)\geq\dim(x,a,b)$. Since $(a,b)$ is, in this case, the unique low-dimensional pair such that $(x,ax+b)$ lies on $L_{a,b}$, one might na\"ively hope to use this fact to derive an estimate of $(x,a,b)$ from an estimate of $(x,ax+b)$. Unfortunately, the dimension of a point is not even semicomputable, so algorithmically distinguishing $(a,b)$ requires a more refined statement.
	\subsection{Previous Work}
	The following lemma, which is essentially geometrical, is such a statement.
	\begin{lem}[N. Lutz and Stull~\cite{LutStu17}]\label{lem:lines}
		Let $a,b,x\in\R$. For all $(u,v)\in\R^2$ such that $u x+v=ax+b$ and $t=-\log\|(a,b)-(u,v)\|\in(0,r]$,
		\[K_{r}(u,v)\geq K_t(a,b) + K^{a,b}_{r-t}(x)-O(\log r)\,.\]
	\end{lem}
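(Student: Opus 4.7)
The inequality expresses that a precision-$r$ approximation of $(u,v)$ must encode two essentially independent pieces of information: a precision-$t$ approximation of $(a,b)$, because the two pairs lie within $2^{-t}$ of each other; and, given $(a,b)$ as an oracle, a precision-$(r-t)$ approximation of $x$, which is recoverable as the intersection of the lines $L_{a,b}$ and $L_{u,v}$. The plan is to realize this decomposition using the chain rule.

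First I would apply Lemma~\ref{lem:chain}(2) to $(u,v)$ at the precisions $r$ and $t$ to obtain
\[
K_r(u,v) = K_t(u,v) + K_{r,t}((u,v)\mid(u,v)) + O(\log r).
\]
The first summand is bounded below by $K_t(a,b) - O(\log r)$ essentially by proximity: any $2^{-t}$-approximation of $(u,v)$ is a $2^{-(t-1)}$-approximation of $(a,b)$, giving $K_{t-1}(a,b) \leq K_t(u,v) + O(1)$, after which a short chain-rule step promotes the $t-1$ back to $t$ at logarithmic cost.

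The main technical content is the geometric lower bound $K_{r,t}((u,v)\mid(u,v)) \geq K^{a,b}_{r-t}(x) - O(\log r)$. Given a witness $\pi$ for this conditional complexity, I would construct an $(a,b)$-oracle algorithm for $x$ at precision $r-t$ as follows: (i) query the oracle for a sufficiently fine rational approximation of $(a,b)$, which by proximity serves as an admissible precision-$t$ input $q$ for $\pi$; (ii) run $\pi(q)$ to obtain $(u',v') \in B_{2^{-r}}(u,v)\cap\Q^2$; (iii) draw an oracle query $(a'',b'') \in B_{2^{-r}}(a,b)\cap\Q^2$ and output the rational intersection $x' = (b''-v')/(u'-a'')$. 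The identity $ax+b=ux+v$ gives $b-v=(u-a)x$, hence $|u-a| = \|(a,b)-(u,v)\|/\sqrt{1+x^2} = \Theta(2^{-t})$, and standard error propagation through the quotient shows that $x'$ approximates $x$ to within $2^{-(r-t)+O(1)}$. Chaining the three bounds and absorbing $K(r), K(t)$ into the logarithmic overhead then yields the lemma.

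The main obstacle is the bookkeeping of precisions and ball radii, because $(a,b)$ sits essentially on the boundary of $B_{2^{-t}}((u,v))$: in step (i) one must either shift $t$ by an additive constant or absorb the mismatch into the $O(\log r)$ slack in order to feed a legitimate input into $\pi$, and in step (iii) the division by the small quantity $u-a$ magnifies errors by a factor of $\Theta(2^t)$, so one has to verify that the resulting precision is indeed $r-t-O(1)$ rather than strictly worse. Once these nuisances are dispatched, the remaining manipulations are routine applications of Lemma~\ref{lem:chain} and the usual Kolmogorov complexity machinery.
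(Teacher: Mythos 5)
This lemma is cited from~\cite{LutStu17} and the present paper does not reprove it, so there is no in-paper proof to compare against. Your argument is a correct reconstruction and follows the same route as the cited proof: split $K_r(u,v)$ via the chain rule (Lemma~\ref{lem:chain}(2)) into $K_t(u,v)$ plus $K_{r,t}\big((u,v)\mid(u,v)\big)$, bound the first term below by $K_t(a,b)-O(\log r)$ using proximity together with the precision-sensitivity lemma, and bound the second term below by $K^{a,b}_{r-t}(x)-O(\log r)$ by exhibiting an $(a,b)$-oracle program that feeds an oracle-derived rational near $(a,b)$ into the witness, recovers a precision-$r$ estimate of $(u,v)$, and extracts $x$ as the intersection $(b''-v')/(u'-a'')$. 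Your error analysis is right: since $|u-a|=\Theta(2^{-t})$, the quotient amplifies $O(2^{-r})$ input error to $O(2^{-(r-t)})$, and the boundary issue (that $(a,b)$ sits exactly on $\partial B_{2^{-t}}(u,v)$) is genuine but, as you note, is handled by shifting to precision $t-1$ and invoking Lemma~\ref{lem:sensitivity} at $O(\log r)$ cost.
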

	
	Roughly, if $\dim(a,b)<\dim^{a,b}(x)$, then Lemma~\ref{lem:lines} tells us that $K_r(u,v)>K_r(a,b)$ unless $(u,v)$ is very close to $(a,b)$. As $K_r(u,v)$ is upper semicomputable, this is algorithmically useful: We can enumerate all pairs $(u,v)$ whose precision-$r$ complexity falls below a certain threshold. If one of these pairs satisfies, approximately, $ux+v=ax+b$, then we know that $(u,v)$ is close to $(a,b)$. Thus, an estimate for $(x,ax+b)$ algorithmically yields an estimate for $(x,a,b)$.
	
	In our previous work~\cite{LutStu17}, we used an argument of this type to prove a general lower bound on the dimension of points on lines in $\R^2$:
	\begin{thm}[N. Lutz and Stull~\cite{LutStu17}]\label{thm:boundingMainThm}
		For all $a,b,x\in\R$,
		\begin{align*}
		\dim(x,ax+b)\geq \dim^{a,b}(x)+ \min\{\dim(a,b),\,\dim^{a,b}(x)\}\,.
		\end{align*}
	\end{thm}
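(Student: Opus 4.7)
The plan is to convert a lower bound on $K_r(x, ax+b)$ into an upper bound on $K_r(a, b \mid x, ax+b)$ via the chain rule, then build that upper bound through an enumeration argument powered by Lemma~\ref{lem:lines}. Writing $s = \dim^{a,b}(x)$ and $\eta = \dim(a,b)$, applying Lemma~\ref{lem:chain}(1) twice, together with the relation $K_r(x \mid a, b) \geq K_r^{a,b}(x) - O(\log r)$ noted in Section~\ref{subsec:compEucPrelim} and the fact that $ax+b$ is computable from $(x, a, b)$, yields
\[
K_r(x, ax+b) \;\geq\; K_r(a, b) + K_r^{a,b}(x) - K_r(a, b \mid x, ax+b) - O(\log r).
\]
So it suffices to upper-bound $K_r(a, b \mid x, ax+b)$ by $K_r(a, b) - K_t(a, b) + O(\log r)$ for a carefully chosen precision $t \leq r$.

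The recovery algorithm, given a precision-$r$ approximation of $(x, ax+b)$ and $O(\log r)$ bits of advice specifying $t$ and a threshold $\tau = K_t(a, b) + O(\log r)$, enumerates rational pairs $(u, v)$ with $K_r(u, v) \leq \tau$ satisfying the approximate compatibility $|u\tilde x + \tilde v - \widetilde{ax+b}| \leq 2^{-t + O(1)}$. I would set $t = r$ when $\eta \leq s$ and $t = sr/\eta$ when $\eta > s$. The minimum-complexity precision-$t$ rational approximation of $(a, b)$ satisfies both conditions, so the enumeration is nonempty. For any other candidate lying exactly on the compatibility line $\{(u', v') : u'x + v' = ax+b\}$ at distance $2^{-t'}$ from $(a, b)$, Lemma~\ref{lem:lines} together with the liminf bounds $K_{t'}(a, b) \geq (\eta - \varepsilon) t'$ and $K_{r - t'}^{a,b}(x) \geq (s - \varepsilon)(r - t')$ gives $K_r(u, v) \geq (\eta - \varepsilon) t' + (s - \varepsilon)(r - t') - O(\log r)$, which for both choices of $t$ exceeds $\tau$ whenever $t' < t - O(\log r)$, pinning every exact-line candidate to within $2^{-t + O(\log r)}$ of $(a, b)$. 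Outputting the first enumerated match thus yields a precision-$t$ estimate of $(a, b)$ at description cost $O(\log r)$; refining from precision $t$ to precision $r$ costs a further $K_r(a, b) - K_t(a, b) + O(\log r)$ bits by Lemma~\ref{lem:chain}(2). Substituting back gives $K_r(x, ax+b) \geq K_t(a, b) + K_r^{a, b}(x) - O(\log r)$, and taking $\liminf$ along $r$ realizing the defining liminfs the two choices $t = r$ and $t = sr/\eta$ yield $\dim(x, ax+b) \geq \eta + s$ and $\dim(x, ax+b) \geq 2s$ respectively, combining to the claimed $s + \min\{\eta, s\}$.

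The main technical obstacle is controlling the enumeration when $\eta > s$, where the compatibility tolerance $2^{-t}$ far exceeds $2^{-r}$: Lemma~\ref{lem:lines} only bounds complexities of exact solutions, while the enumeration accepts $(u, v)$ lying anywhere in a thin strip around the compatibility line. Resolving this requires either showing that every low-complexity $(u, v)$ in the strip lies close to a low-complexity exact solution (to which the lemma then applies) or bounding the cardinality of surviving candidates inside the $2^{-t}$-ball around $(a, b)$ and tagging the correct one with $O(\log r)$ extra advice. A secondary issue---that $t$, $\eta$, and $K_t(a, b)$ are noncomputable and must be quantized and supplied as advice---is routine once the quantization step is chosen fine enough not to affect the final bound but coarse enough to fit in the $O(\log r)$ budget.
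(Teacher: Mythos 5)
Your overall skeleton (a chain-rule decomposition of $K_r(x,ax+b)$ plus an enumeration of low-complexity candidates near the line, with Lemma~\ref{lem:lines} doing the geometric work) matches the paper's, but the step that actually carries the theorem fails, and it is exactly the step the paper's oracle construction exists to handle. In the case $\eta=\dim(a,b)>s=\dim^{a,b}(x)$, your exclusion claim has the inequality backwards: for an exact solution $(u,v)$ at distance $2^{-t'}$ from $(a,b)$, Lemma~\ref{lem:lines} gives the lower bound $(\eta-\ve)t'+(s-\ve)(r-t')=(s-\ve)r+(\eta-s)t'$, which is \emph{increasing} in $t'$, i.e., weakest for candidates \emph{far} from $(a,b)$; against your threshold $\tau\approx K_t(a,b)\approx\eta t=sr$ it fails to exceed $\tau$ precisely when $t'\lesssim\ve r/(\eta-s)$, so the far-away low-complexity candidates are the ones you cannot rule out, not the ones you can. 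The problem is compounded by the ``strip'' issue you flag: with compatibility tolerance $2^{-t}$, a candidate $p$ with $K(p)\le\tau$ is (by Observation~\ref{obs:linemachine}) only within $O(2^{-t})$ of an exact solution $(u,v)$, so $K(p)\le\tau$ only yields $K_t(u,v)\lesssim\tau$, and at precision $t$ the Lemma~\ref{lem:lines} bound $(\eta-\ve)t''+(s-\ve)(t-t'')\le\eta t\approx\tau$ excludes nothing at all when $\eta>s$. Your $t=r$ branch has a parallel failure at precisions $r$ where $K_r(a,b)\gg\eta r$ (only the liminf of $K_r(a,b)/r$ equals $\eta$; when $\Dim(a,b)>\dim^{a,b}(x)$ the threshold $K_r(a,b)$ can spike above anything Lemma~\ref{lem:lines} can deliver). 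Relatedly, you cannot ``take the liminf along $r$ realizing the defining liminfs'': $\dim(x,ax+b)$ is itself a liminf, so the complexity lower bound must hold at every sufficiently large precision, not along a chosen subsequence.

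The proof in~\cite{LutStu17} (via Lemmas~\ref{lem:pointBounding} and~\ref{lem:oracles}) resolves exactly this difficulty: rather than recovering $(a,b)$ at a coarser precision $t$, it relativizes to an oracle $A(r,(a,b),\eta)$ that lowers $K_r(a,b)$ to about $\eta r$, for a rational $\eta$ \emph{below} $\dim^{a,b}(x)$, at all precisions up to $r$. The enumeration threshold is then uniformly $\approx\eta r$, and every exact solution far from $(a,b)$ has relativized complexity at least $\eta t'+s(r-t')\ge\eta r+\delta(r-t')$ with margin $\delta=s-\eta>0$, which is condition 3 of Lemma~\ref{lem:pointBounding}; since relativization only lowers complexity, the unrelativized $K_r(x,ax+b)$ inherits the resulting bound at every large $r$. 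The cost of forcing $\eta$ below $\dim^{a,b}(x)$ is precisely the $\min\{\dim(a,b),\dim^{a,b}(x)\}$ in Theorem~\ref{thm:boundingMainThm}; your attempt to avoid the oracle by shrinking the recovery precision to $t=sr/\eta$ does not provide a working substitute for it.
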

	The strategy in that work is to use oracles to artificially lower $K_r(a,b)$ when necessary, to essentially force $\dim(a,b)<\dim^{a,b}(x)$. This enables the above argument structure to be used, but lowering the complexity of $(a,b)$ also weakens the conclusion, leading to the minimum in Theorem~\ref{thm:boundingMainThm}.
	
	\subsection{Technical Lemmas}
	In the present work, we circumvent this limitation and achieve inequality (\ref{eq:lower}) by controlling the choice of $x$ and placing a condition on $(a,b)$. Adapting the above argument to the case where $\dim(a,b)>\dim^{a,b}(x)$ requires refining the techniques of~\cite{LutStu17}. In particular, we use the following two technical lemmas, which strengthen results from that work. Lemma~\ref{lem:pointInduct} weakens the conditions needed to compute an estimate of $(x,a,b)$ from an estimate of $(x,ax+b)$.
	
	\begin{lem}\label{lem:pointInduct}
		Let $a,b,x\in\R$, $k \in \N$, and $r_0=1$. Suppose that $r_1,\ldots, r_k\in\N$, $\delta\in\R_+$, and $\ve,\eta\in\Q_+$ satisfy the following conditions for every $1\leq i\leq k$.
		\begin{enumerate}
			\item $r_i \geq \log(2|a|+|x|+6)+r_{i-1}$.
			\item $K_{r_i}(a,b)\leq \left(\eta+\ve\right)r_i$.
			\item For every $(u,v)\in\R^2$ such that $t=-\log\|(a,b)-(u,v)\|\in(r_{i-1},r_i]$ and $ux+v=ax+b$, $K_{r_i}(u,v)\geq\left(\eta-\ve\right)r_i+\delta\cdot(r_i- t)$.
		\end{enumerate}
		Then for every oracle set $A \subseteq \N$, 
		\[K^A_{r_k}(a, b, x \, | \, x, ax + b) \leq 2^{k}\left(K(\ve)+ K(\eta) + \frac{4\ve}{\delta} r_k + O(\log r_k)\right)\,.\]
	\end{lem}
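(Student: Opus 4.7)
The plan is to proceed by induction on $k$, constructing at each stage $i$ an algorithm that, with access to $x$ and $ax+b$ and to the output of stage $i-1$, produces a precision-$r_i$ approximation of $(a,b)$ (and hence of $x$). The geometric content of the argument is a converse-flavored use of Lemma~\ref{lem:lines}: conditions~(2) and (3) together force the low-complexity points on the line $ux+v=ax+b$ to cluster in a small disk around $(a,b)$, which an enumerator can isolate.

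For the refinement at stage $i$, first enumerate $S_i=\{(u,v)\in\Q^2 : K_{r_i}(u,v)\leq(\eta+\ve)r_i\}$, a c.e.\ set once $\ve$, $\eta$, and $r_i$ are specified; condition~(2) places $(a,b)$ in $S_i$. Using $x$ and $ax+b$ from the conditional, at the precision afforded by condition~(1), discard those $(u,v)\in S_i$ that fail $ux+v\approx ax+b$. For any surviving candidate at distance $2^{-t}$ from $(a,b)$ with $t\in(r_{i-1},r_i]$, combining the $S_i$ upper bound with condition~(3) gives
\[
(\eta+\ve)r_i \geq K_{r_i}(u,v) \geq (\eta-\ve)r_i + \delta(r_i-t),
\]
so $r_i - t \leq 2\ve r_i/\delta$. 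Hence every surviving candidate lies within $O(2^{-r_i+2\ve r_i/\delta})$ of $(a,b)$, and specifying $(a,b)$ among them costs about $4\ve r_i/\delta$ bits (two coordinates, each uncertain to within $2\ve r_i/\delta$ bits) on top of an overhead of $K(\ve)+K(\eta)+O(\log r_i)$. Once $(a,b)$ is pinned down at precision $r_i$, $x$ is recovered from $ax+b$ by arithmetic, again at $O(\log r_i)$ cost, where condition~(1) is used to ensure the arithmetic error is absorbed in the precision.

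Between stages we must also transmit enough to locate the stage-$(i-1)$ output among stage-$i$'s candidates so the filter can start from the right seed; the cleanest induction conservatively doubles the cost at each step, absorbing these cross-terms and the recursive re-description of earlier choices, which produces the $2^k$ factor in the final bound. The main obstacle is the geometric squeeze itself: one must verify that at every scale $t\in(r_{i-1},r_i]$ condition~(3) really does leave only the narrow neighborhood around $(a,b)$, and that the numerical sensitivity of the line-equation filter (controlled by condition~(1)) never erases legitimate candidates nor admits spurious ones at the edge of the annulus. Managing these error budgets while staying within the stated allowance of $2^k(K(\ve)+K(\eta)+4\ve r_k/\delta+O(\log r_k))$ bits is the crux of the argument.
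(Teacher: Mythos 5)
Your approach is essentially the same as the paper's. The paper's proof is likewise an induction on $k$ whose base case is a one-scale corollary (Corollary~\ref{cor:point}) of Lemma~\ref{lem:pointBounding}, and whose inductive step implements exactly the machine you sketch: given a precision-$r_k$ estimate of $(x,ax+b)$, the parameters $\ve,\eta,r_1,\ldots,r_k$, and a precision-$r_{k-1}$ ``hint'' for $(a,b)$ (supplied by the stage-$(k-1)$ description, whose length is controlled by the inductive hypothesis), it dovetails through programs of length at most $(\eta+\ve)r_k$, filters their outputs by the approximate line equation using the geometric Observation~\ref{obs:linemachine}, and returns the first survivor; conditions~(2) and (3) give the squeeze $r_k - t \leq \frac{2\ve}{\delta}r_k$ you identify, so refining the survivor to $(a,b)$ costs $\approx\frac{4\ve}{\delta}r_k$ bits, and adding the hint's cost yields the stated bound, which the paper (somewhat loosely, since the recursion is additive) rounds up to the $2^k$ factor.
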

	Lemma~\ref{lem:oraclesSpectra} strengthens the oracle construction of~\cite{LutStu17}, allowing us to control complexity at multiple levels of precision.
	\begin{lem}\label{lem:oraclesSpectra}
		Let $z\in\R^n$, $\eta\in\Q\cap[0,\dim(z)]$, and $k\in\N$. For all $r_1, \ldots, r_k \in \N$, there is an oracle $D=D(r_1,\ldots, r_k,z,\eta)$ such that
		\begin{enumerate}
			\item For every $t \leq r_1$,  $K^D_t(z) =\min\{\eta r_1,K_t(z)\}+ O(\log r_k)$
			\item For every $1 \leq i \leq k$, \[K^D_{r_i}(z) = \eta r_1 + \sum_{j =2}^i \min\{\eta (r_j - r_{j-1}), K_{r_j, r_{j-1}}(z \, | \, z)\} + O(\log r_k)\,.\]
			\item For every $t\in\N$ and $x\in\R$, $K^{z,D}_t(x) = K^z_t(x) + O(\log r_k)$.
		\end{enumerate}
	\end{lem}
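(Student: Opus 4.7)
The plan is to adapt the single-precision oracle construction of~\cite{LutStu17} by layering short ``hints'' at each of the precision levels $r_1, \ldots, r_k$. At level $i$, the hint describes a canonical precision-$r_i$ rational approximation of $z$ given the one at precision $r_{i-1}$, using only about $\eta(r_i - r_{i-1})$ bits whenever the natural conditional complexity is larger. The crucial global property is that $D$ is computable from $z$ together with an $O(\log r_k)$-size description of the parameters, so that $D$ adds no information beyond $z$ itself; this is what makes part 3 essentially immediate.

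Concretely, I would first fix, for each $i$, a canonical precision-$r_i$ rational approximation $p_i$ of $z$, chosen as a shortest witness to $K_{r_i, r_{i-1}}(z \mid z)$ given $p_{i-1}$. I would then let $D$ be a prefix-free concatenation of hint blocks, one per level, along with the parameter data $(r_1, \ldots, r_k, \eta)$. If $K_{r_i, r_{i-1}}(z \mid z) \leq \eta(r_i - r_{i-1})$, the $i$th block contains a shortest program witnessing this conditional complexity. Otherwise, it contains an enumeration-index $\pi_i$ of length exactly $\lfloor \eta(r_i - r_{i-1}) \rfloor$ that selects $p_i$ from among the precision-$r_i$ rationals near $p_{i-1}$ reachable by programs of that length given $p_{i-1}$. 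The upper bounds in parts 1 and 2 then follow by the natural decoding algorithm: given $D$, read the hints in order to recover $p_1, p_2, \ldots, p_i$ and truncate to precision $t$ when $t < r_i$, yielding $K^D_t(z) \leq \eta r_1 + \sum_{j=2}^{i} \min\{\eta(r_j - r_{j-1}), K_{r_j, r_{j-1}}(z \mid z)\} + O(\log r_k)$.

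The matching lower bounds require an inductive application of the chain rule (Lemma~\ref{lem:chain}) together with a counting argument over the sizes of the hint blocks: since $D$ is computable from $z$ and only contains about $\eta(r_j - r_{j-1})$ extra useful bits at each level, no program with oracle $D$ can compute $z$ at precision $r_i$ much more cheaply than the claimed bound. I expect the main obstacle to be precisely matching the stated constants, especially in the mixed regime where $K_{r_j, r_{j-1}}(z \mid z)$ is smaller than $\eta(r_j - r_{j-1})$: in that case the hint must be chosen as an \emph{optimal} conditional witness rather than a padded one, and one must carefully verify that the hint at level $j$ does not accidentally enable an anomalously cheap description of $z$ at some earlier precision level. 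Once this bookkeeping is in place, part 3 follows by simulating any program with oracle $(z, D)$ by one that first reconstructs $D$ from $z$ and the parameters, so $K^{z,D}_t(x) = K^z_t(x) + O(\log r_k)$.
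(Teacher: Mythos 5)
The paper proves this lemma by a short recursive construction: it sets $D_1 = A(r_1,z,\eta)$ from Lemma~\ref{lem:oracles}, and for $i>1$ either leaves $D_i = D_{i-1}$ (when $K^{D_{i-1}}_{r_i}(z)<\eta r_i$) or interleaves $D_{i-1}$ with a fresh oracle $A^{D_{i-1}}(r_i,z,\eta)$ obtained by relativizing Lemma~\ref{lem:oracles}; all three properties then follow by induction on $k$. You instead rebuild the oracle from scratch as a concatenation of explicit ``hint blocks.'' That route isn't unreasonable in spirit, but as written it has two genuine gaps.

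First, your encoding of the hint in the regime $K_{r_i,r_{i-1}}(z\mid z)>\eta(r_i-r_{i-1})$ does not work. You propose an enumeration-index of length $\lfloor\eta(r_i-r_{i-1})\rfloor$ that ``selects $p_i$ from among the precision-$r_i$ rationals near $p_{i-1}$ reachable by programs of that length given $p_{i-1}$'' --- but by assumption $p_i$ \emph{requires} more than $\eta(r_i-r_{i-1})$ bits given $p_{i-1}$, so $p_i$ is not in the set over which you are indexing, and the index conveys nothing. What one actually needs (and what the single-level construction of Lemma~\ref{lem:oracles} does) is to split a shortest conditional program $\pi$ for $p_i$ given $p_{i-1}$ as $\pi=\sigma\tau$ with $\ell(\tau)\approx\eta(r_i-r_{i-1})$, expose $\sigma$ via the oracle, and leave $\tau$ to be supplied by the querying program. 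Your scheme also lacks the analogue of the paper's case split on $K^{D_{i-1}}_{r_i}(z)<\eta r_i$, which is what prevents adding a new layer from over-lowering the complexity when the previous levels already brought it below $\eta r_i$; without it the lower bound direction of part 2 is at risk.

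Second, your justification of part 3 rests on the claim that ``$D$ is computable from $z$ together with an $O(\log r_k)$-size description of the parameters.'' That claim is false for the hints you describe: a shortest witness (or a canonical shortest conditional program) is not computable, even given oracle access to $z$, since certifying minimality requires solving the halting problem for all shorter programs. The paper never needs $D$ to be computable from $z$; it instead invokes property~3 of Lemma~\ref{lem:oracles} relativized to $D_{j-1}$ at each stage, and propagates the conclusion through the induction. If you want to keep the from-scratch construction, you would need to argue part 3 by that kind of relativized, stagewise information-nonhelpfulness argument rather than via computability of $D$ from $z$. Finally, the exact (two-sided) equalities in parts 1 and 2 are only sketched in your proposal; in the paper these drop out of the induction for free once Lemma~\ref{lem:oracles} is taken as given.
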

	
	\section{Main Theorems}\label{ssec:mainthm}
	We are now prepared to prove our two main theorems. We first show that, for lines $L_{a, b}$ such that $\dim(a, b) = \Dim(a, b)$, the dimension spectrum $\spec(L_{a,b})$ contains the unit interval.
	\begin{thm}\label{thm:spectraMainTheorem}
		Let $a, b \in\R$ satisfy $\dim(a, b) = \Dim(a, b)$. Then for every $s \in [0, 1]$ there is a point $x\in\R$ such that $\dim(x, ax + b) = s + \min\{\dim(a,b), 1\}$.
	\end{thm}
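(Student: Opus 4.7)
Write $d = \min\{\dim(a,b), 1\}$. The plan is to construct $x$ by padding an $(a,b)$-random real, derive $\dim(x, ax+b) \leq s + d$ by the chain rule, and derive the matching lower bound by contradiction using Lemma~\ref{lem:pointInduct}, invoking Lemma~\ref{lem:oraclesSpectra} to handle the case $\dim(a,b) > 1$. Concretely: for $s = 0$, take $x = a$, so $\dim(x, ax + b) = \dim(a, a^2 + b) = \dim(a,b) = d$. For $s \in (0,1]$, let $y \in \R$ be algorithmically random relative to $(a,b)$ and form $x$ by interleaving the bits of $y$ with zeros at density $s$; then $K^{a,b}_r(x) = sr + O(\log r)$ and $K_r(x) = sr + O(\log r)$ for every $r$, yielding $\dim^{a,b}(x) = \Dim^{a,b}(x) = \dim(x) = \Dim(x) = s$ and $\dim(a,b,x) = \Dim(a,b,x) = s + \dim(a,b)$.

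\emph{Upper bound.} When $\dim(a,b) \leq 1$, Observation~\ref{obs:upper} and Lemma~\ref{lem:chain} give $K_r(x, ax+b) \leq K_r(a,b,x) \leq K_r(a,b) + K_r(x \mid a,b) + O(\log r) \leq (d + s)r + o(r)$. When $\dim(a,b) > 1$, use instead $K_r(x, ax+b) \leq K_r(x) + K_r(ax+b \mid x) + O(\log r) \leq sr + r + O(\log r)$ to conclude $\dim(x, ax+b) \leq s + 1 = s + d$.

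\emph{Lower bound.} If $s \geq d$, Theorem~\ref{thm:boundingMainThm} already gives $\dim(x, ax+b) \geq \dim^{a,b}(x) + \min\{\dim(a,b), \dim^{a,b}(x)\} = s + d$. The substantive case is $s < d$, which I argue by contradiction: assume $K_r(x, ax+b) < (s + d - \gamma)r$ for some $\gamma > 0$ and infinitely many $r$. For each such large $r$, set $r_k = r$ and pick $r_0 = 1 < r_1 < \cdots < r_k = r$ whose consecutive ratios sit just above $1$, together with $\eta = d$ and small rationals $\ve, \delta$, so that conditions~1--3 of Lemma~\ref{lem:pointInduct} are satisfied. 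Condition~2 uses $\Dim(a,b) = d$. For condition~3, Lemma~\ref{lem:lines} combined with $\dim(a,b) = d$ and $\dim^{a,b}(x) = s$ supplies $K_{r_i}(u, v) \geq dt + s(r_i - t) - o(r_i) = dr_i - (d - s)(r_i - t) - o(r_i)$ for every $(u,v)$ with $ux + v = ax + b$ and $\|(a,b)-(u,v)\| = 2^{-t}$; this exceeds $(d - \ve)r_i + \delta(r_i - t)$ exactly when $(d - s + \delta)(r_i - t) \leq \ve r_i$, which the near-geometric spacing guarantees via $r_i - t \leq r_i - r_{i-1}$. Lemma~\ref{lem:pointInduct} then yields $K^A_{r_k}(a,b,x \mid x, ax+b) = o(r_k)$, and the chain rule combined with the assumed deficit forces $K_{r_k}(a,b,x) < (s + d - \gamma + o(1))r_k$, contradicting $\dim(a,b,x) = s + d$. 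For $\dim(a,b) > 1$, repeat this argument relative to an oracle $D$ produced by Lemma~\ref{lem:oraclesSpectra} with $\eta = 1$ and the same precisions $r_1, \ldots, r_k$; its property~2 caps $K^D_{r_i}(a,b)$ at $r_i + O(\log r_k)$ and its property~3 preserves $K^{a,b}_r(x)$, so since $\dim^D \leq \dim$ the lower bound $\dim^D(x, ax+b) \geq s + 1$ transfers to the unrelativized setting.

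\emph{Main obstacle.} The delicate step is choosing $k$, $\ve$, $\delta$, and the sequence $r_1, \ldots, r_k$ so that condition~3 of Lemma~\ref{lem:pointInduct} is uniformly verifiable while the resulting error $2^k \cdot 4 \ve r_k / \delta$ stays well below $\gamma r_k$. This balancing depends crucially on the two-sided hypothesis $\dim(a,b) = \Dim(a,b)$, which pins $K_r(a,b)$ near $dr$ at all sufficiently large $r$ rather than only along a subsequence, and on the explicit padded form of $x$, which pins $K^{a,b}_r(x)$ near $sr$ at every $r$; together these prevent any individual scale $r_i$ from contributing an uncontrolled error and make the near-geometric spacing of the $r_i$ compatible with condition~3.
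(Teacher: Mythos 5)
The central technical obstacle you flag at the end---keeping $2^k \cdot 4\ve r_k/\delta$ below the target slack---is not merely ``delicate'' in your setup; it is fatal. With uniformly padded $x$ satisfying $K_r(x) \approx sr$ at \emph{every} precision, your verification of condition~3 reduces to $(d - s + \delta)(r_i - t) \leq \ve r_i$, hence to $r_i - r_{i-1} \leq \frac{\ve}{d-s+\delta}\,r_i$. Covering $(1, r]$ with scales satisfying this forces $k = \Theta\!\left(\frac{\log r}{\ve}\right)$, so $2^k$ is a polynomial in $r$ of degree $\Theta(1/\ve)$, and $2^k\cdot\ve r$ is superlinear in $r$ no matter how $\ve$ is chosen. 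The parameters cannot be balanced. This is exactly why the paper does \emph{not} use uniform padding: it defines a sparse sequence $h_j$ (tied to scales where $K_h(a,b)$ nears $\dim(a,b)\cdot h$) and zeroes out $x$ only on the blocks $(sh_j, h_j]$. That makes $x$ \emph{fully} random---not $s$-dense---up to precision $sh_j$, so in condition~3 one gets $K_{r_i - t}(x) \geq (r_i - t) - o(\cdot) = (\eta + \delta)(r_i - t) - o(\cdot)$ with no smallness constraint on $r_i - t$. The paper then uses \emph{arithmetic} spacing $r_i = ish_j$ over the bounded window $r \in (sh_j, mh_j]$, which caps $k$ at the constant $m/s$; precisions $r \in [mh_j, sh_{j+1})$ are handled by a separate, direct argument from Theorem~\ref{thm:boundingMainThm}, which your contradiction framing also omits.

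Two smaller issues. First, your $s = 0$ case only covers $\dim(a,b) \leq 1$: if $\dim(a,b) > 1$ then $\dim(a, a^2+b) = \dim(a,b) \neq d = 1$, and you need Turetsky's result that every line contains a point of effective Hausdorff dimension $1$, which the paper cites. Second, the paper invokes the oracle $D$ of Lemma~\ref{lem:oraclesSpectra} in \emph{all} cases with $\eta \in \Q \cap (s, \min\{d,1\})$, not only when $\dim(a,b) > 1$ with $\eta = 1$; the oracle is what makes condition~2 of Lemma~\ref{lem:pointInduct} hold at every $r_i$ simultaneously (via property~2), which $\dim(a,b) = \Dim(a,b)$ alone does not literally guarantee for arbitrary precisions. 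Your choice $\eta = d$ also sits outside the paper's range $H$, and for $d < 1$ condition~2 with $\eta = d$ would require $K_{r_i}(a,b) \leq (d + \ve)r_i$ at the precisions you use, which again needs the oracle or the $\Dim$ hypothesis applied carefully.
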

	
	\begin{proof}
		Every line contains a point of effective Hausdorff dimension 1~\cite{Ture11}, and by the preservation of effective dimensions under computable bi-Lipschitz functions, $\dim(a,a^2+b)=\dim(a,b)$, so the theorem holds for $s=0$.
		
		Now let $s \in (0,1]$ and $d=\dim(a, b) = \Dim(a, b)$. Let $y \in \R$ be random relative to $(a, b)$. That is, there is some constant $c \in \N$ such that for all $r \in \N$, $K^{a, b}_r(y) \geq r - c$.
		Define sequence of natural numbers $\{h_j\}_{j \in \N}$ inductively as follows. Define $h_0 = 1$. For every $j > 0$, define 
		\begin{equation*}
		h_j = \min\left\{h \geq 2^{h_{j-1}}: K_h(a, b) \leq \left(d + \frac{1}{j}\right)h\right\}.
		\end{equation*}
		Note that $h_j$ always exists. For every $r \in \N$, let
		\begin{align*}
		x[r] = \begin{cases}
		0 &\text{ if } \frac{r}{h_j} \in (s, 1] \text{ for some } j \in \N \\
		y[r] &\text{ otherwise}
		\end{cases}
		\end{align*}
		Define $x \in \R$ to be the real number with this binary expansion. Then $K_{sh_j}(x)=sh_j+O(\log sh_j)$.
		
		We first show that $\dim(x, ax + b) \leq s + \min\{d, 1\}$. For every $j \in \N$, 
		\begin{align*}
		K_{h_j}(x,ax+b)&= K_{h_j}(x) + K_{h_j}(ax + b \, | \, x) + O(\log h_j) \\
		&= K_{sh_j}(x) + K_{h_j}(ax + b \, | \, x) + O(\log h_j) \\
		&= K_{sh_j }(y) + K_{h_j}(ax + b \, | \, x) + O(\log h_j) \\
		&\leq sh_j + \min\{d,1\}\cdot h_j + o(h_j)\,.
		\end{align*}
		Therefore, 
		\begin{align*}
		\dim(x, ax + b) &= \liminf_{r \to \infty} \frac{K_r(x, ax + b)}{r}\\
		&\leq \liminf_{j \to\infty} \frac{K_{h_j}(x, ax + b)}{h_j}\\
		&\leq \liminf_{j \to \infty} \frac{s h_j + \min\{d, 1\} h_j  +o(h_j)}{h_j}\\
		&= s + \min\{d, 1\}\,.
		\end{align*}
		
		If $1\geq s \geq d$, then by Theorem~\ref{thm:boundingMainThm} we also have
		\begin{align*}
		\dim(x,ax+b)&\geq \dim(x\,|\,a,b) +\dim(a,b)\\
		&=\dim(x)+d\\
		&=\liminf_{r\to\infty}\frac{K_r(x)}{r}+d\\
		&=\liminf_{j\to\infty}\frac{K_{h_j}(x)}{h_j}+d\\
		&=s+\min\{d,1\}\,.
		\end{align*}
		Hence, we may assume that $s < d$.
		
		Let $H = \Q \cap (s, \min\{d,1\})$. We now show that for every $\eta \in H$ and $\ve \in \Q_+$, $\dim(x, ax + b) \geq s + \eta - \alpha\ve$, where $\alpha$ is some constant independent of $\eta$ and $\ve$.
		
		Let $\eta \in H$, $\delta = 1 - \eta > 0$, and $\ve \in \Q_+$. Let $j \in \N$ and $m = \frac{s - 1}{\eta - 1}$. We first show that 
		\begin{equation}\label{eq:mainThmeq}
		K_{r}(x, ax + b) \geq K_r(x) + \eta r - c\frac{\ve}{\delta} r - o(r)\,,
		\end{equation}
		for every $r \in (sh_j, mh_j]$. Let $r \in (sh_j, mh_j]$. Set $k = \frac{r}{sh_j} $, and define $r_i = i s h_j$ for all $1 \leq i \leq k$. Note that $k$ is bounded by a constant depending only on $s$ and $\eta$. Therefore $o(r_k)$ is sublinear for all $r_i$. Let $D_{r} = D(r_1,\ldots, r_k, (a, b), \eta)$ be the oracle defined in Lemma \ref{lem:oraclesSpectra}. We first note that, since $\dim(a, b) = \Dim(a, b)$,
		\begin{align*}
		K_{r_i, r_{i-1}}(a, b \, | \, a, b) &= K_{r_i}(a, b) - K_{r_{i-1}}(a, b) - O(\log r_i)\\
		&= \dim(a, b)r_i - o(r_i) - \dim(a, b) r_{i-1} - o(r_{i-1}) - O(\log r_i)\\
		&= \dim(a, b)(r_i - r_{i-1}) - o(r_i)\\
		&\geq \eta(r_i - r_{i-1}) - o(r_i).
		\end{align*}
		Hence, by property 2 of Lemma~\ref{lem:oraclesSpectra}, for every $1 \leq i \leq k$,
		\begin{equation}\label{eq:boundOracle}
		\vert K^{D_r}_{r_i}(a, b) - \eta r_i \vert \leq o(r_k).
		\end{equation}
		We now show that the conditions of Lemma~\ref{lem:pointInduct} are satisfied. By inequality (\ref{eq:boundOracle}), for every $1 \leq i \leq k$, 
		\[K^{D_r}_{r_i}(a, b) \leq \eta r_i + o(r_k)\,,\]
		and so $K^{D_r}_{r_i}(a, b) \leq (\eta + \ve) r_i$, for sufficiently large $j$. Hence, condition 2 of Lemma \ref{lem:pointInduct} is satisfied. 
		
		To see that condition 3 is satisfied for $i=1$, let $(u, v) \in B_1(a, b)$ such that $ux + v = ax + b$ and $t=-\log\|(a,b)-(u,v)\| \leq r_1$. Then, by Lemmas \ref{lem:lines} and \ref{lem:oraclesSpectra}, and our construction of $x$,
		\begin{align*}
		K^{D_{r}}_{r_1}(u,v) &\geq K^{D_{r}}_t(a,b) + K^{D_{r}}_{r_1-t,r_1}(x|a,b)-O(\log r_1)\\
		&\geq \min\{\eta r_1, K_t(a, b)\} + K_{r_1-t}(x)- o(r_k)\\
		&\geq \min\{\eta r_1, dt - o(t)\} + (\eta + \delta)(r_1 - t) - o(r_k)\\ 
		&\geq \min\{\eta r_1, \eta t - o(t)\} + (\eta + \delta)(r_1 - t) - o(r_k) \\ 
		&\geq \eta t - o(t) + (\eta + \delta)(r_1 - t) - o(r_k)\,.
		\end{align*}
		We conclude that $K^{D_{r}}_{r_1}(u,v)  \geq (\eta - \ve)r_1 + \delta(r_1 - t)$, for all sufficiently large $j$.
		
		To see that that condition 3 is satisfied for $1 < i \leq k$, let $(u, v) \in B_{2^{-r_{i-1}}}(a, b)$ such that $ux + v = ax + b$ and $t=-\log\|(a,b)-(u,v)\| \leq r_i$. Since $(u, v) \in B_{2^{-r_{i-1}}}(a, b)$, 
		\[r_i - t \leq r_i - r_{i-1}= ish_j - (i-1)sh_j \leq  sh_j  + 1\leq r_1 + 1\,.\]
		Therefore, by Lemma \ref{lem:lines}, inequality (\ref{eq:boundOracle}), and our construction of $x$,
		\begin{align*}
		K^{D_{r}}_{r_i}(u,v) &\geq K^{D_{r}}_t(a,b) + K^{D_{r}}_{r_i-t,r_i}(x|a,b)-O(\log r_i)\\
		&\geq \min\{\eta r_i, K_t(a, b)\} + K_{r_i-t}(x)- o(r_i)\\
		&\geq \min\{\eta r_i, dt - o(t)\} + (\eta + \delta)(r_i - t) - o(r_i)\\ 
		&\geq \min\{\eta r_i, \eta t - o(t)\} + (\eta + \delta)(r_i - t) - o(r_i) \\ 
		&\geq \eta t - o(t) + (\eta + \delta)(r_i - t) - o(r_i)\,,
		\end{align*}
		We conclude that $K^{D_{r}}_{r_i}(u,v)  \geq (\eta - \ve)r_i + \delta(r_i - t)$, for all sufficiently large $j$. Hence the conditions of Lemma \ref{lem:pointInduct} are satisfied, and we have
		\begin{align*}
		K_{r}(x, ax + b) \geq&\ K^{D_{r}}_{r}(x, ax + b) - O(1)\\
		\geq&\ K^{D_{r}}_{r}(a, b, x) - 2^k\left(K(\ve) + K(\eta) + \frac{4\ve}{\delta} r + O(\log r)\right)\\
		=&\ K^{D_{r}}_{r}(a, b) + K^{D_{r}}_{r}(x \, | \, a, b) \\
		&\qquad\qquad\ \,- 2^k\left(K(\ve) + K(\eta) + \frac{4\ve}{\delta} r + O(\log r)\right)\\
		\geq&\ sr + \eta r - 2^k\left(K(\ve) + K(\eta) + \frac{4\ve}{\delta} r + O(\log r)\right).
		\end{align*}
		Thus, for every $r \in (sh_j, mh_j]$,
		\begin{equation*}
		K_{r}(x, ax + b) \geq sr + \eta r - \frac{\alpha\ve}{\delta} r - o(r)\,,
		\end{equation*}
		where $\alpha$ is a fixed constant, not depending on $\eta$ and $\ve$.

		To complete the proof, we show that (\ref{eq:mainThmeq}) holds for every $r \in [mh_j, sh_{j + 1})$. By Lemma~\ref{lem:chain} and our construction of $x$,
		\begin{align*}
		K_r(x) &= K_{r, h_j}(x\,|\,x) + K_{h_j}(x)+o(r)
		\\&= r - h_j  + sh_j+o(r)
		\\&\geq \eta r+o(r)\,.
		\end{align*}
		The proof of Theorem \ref{thm:boundingMainThm} gives $K_r(x, ax + b) \geq K_r(x) + \dim(x)r - o(r)$, and so $K_r(x, ax + b) \geq r(s + \eta)$.
		
		Therefore, equation (\ref{eq:mainThmeq}) holds for every $r \in [sh_j, sh_{j+1})$, for all sufficiently large $j$. Hence,
		\begin{align*}
		\dim(x, ax + b) &= \liminf\limits_{r \rightarrow \infty} \frac{K_r(x, ax + b)}{r}\\
		&\geq \liminf\limits_{r \rightarrow \infty} \frac{K_r(x) + \eta r - \frac{\alpha\ve}{\delta} r - o(r)}{r}\\
		&\geq \liminf\limits_{r \rightarrow \infty} \frac{K_r(x)}{r} + \eta - \frac{\alpha\ve}{\delta}\\
		&= s + \eta - \frac{\alpha\ve}{\delta}\,.
		\end{align*}
		Since $\eta$ and $\ve$ were chosen arbitrarily, the conclusion follows.
	\end{proof}

	\begin{thm}\label{thm:spectraMainThm2}
		Let $a, b \in\R$ such that $\dim(a, b) \geq 1$. Then for every $s \in [\frac{1}{2}, 1]$ there is a point $x\in\R$ such that  $\dim(x, ax + b) \in \left[\frac{3}{2} + s - \frac{1}{2s}, s + 1\right]$.
	\end{thm}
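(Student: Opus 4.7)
The plan is to adapt the proof of Theorem~\ref{thm:spectraMainTheorem} to the setting where $\dim(a,b) \geq 1$ but $\Dim(a,b)$ may exceed $\dim(a,b)$; I would use the same construction of $x$, but the resulting lower bound on $\dim(x, ax+b)$ is weaker because the iteration in the lemmas degrades. Let $d = \dim(a,b)$, let $y \in \R$ be random relative to $(a,b)$, and define $h_0 = 1$ and $h_j = \min\{h \geq 2^{h_{j-1}} : K_h(a,b) \leq (d+1/j)h\}$. Set $x[r] = 0$ if $r/h_j \in (s, 1]$ for some $j$ and $x[r] = y[r]$ otherwise, so that $K_{sh_j}(x) = K_{h_j}(x) = sh_j + O(\log h_j)$ exactly as in Theorem~\ref{thm:spectraMainTheorem}.

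The upper bound $\dim(x, ax + b) \leq s + 1$ follows from the chain rule at precision $h_j$: $K_{h_j}(x, ax+b) \leq K_{h_j}(x) + K_{h_j}(ax + b \mid x) + O(\log h_j) \leq sh_j + h_j + o(h_j)$, using $K_{h_j}(ax+b \mid x) \leq h_j + O(\log h_j)$ because $ax + b$ is a single real. Taking the liminf along $\{h_j\}$ gives the upper bound.

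For the lower bound $\dim(x, ax+b) \geq \frac{3}{2} + s - \frac{1}{2s}$, the plan is to invoke Lemmas~\ref{lem:pointInduct} and~\ref{lem:oraclesSpectra} in the same pattern as Theorem~\ref{thm:spectraMainTheorem}: for each sufficiently large $r$, choose $r_1 = sh_j$ and $r_i = i\cdot sh_j$ with $r_k$ close to $r$, and apply Lemma~\ref{lem:pointInduct} relativized to the oracle $D_r$ furnished by Lemma~\ref{lem:oraclesSpectra}. Condition~3 still follows from Lemma~\ref{lem:lines}, the bound $K_t(a,b) \geq t - o(t)$ supplied by $\dim(a,b) \geq 1$, and the relative randomness of $y$. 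The critical departure from Theorem~\ref{thm:spectraMainTheorem} is that the conditional increments $K_{r_i, r_{i-1}}(a,b \mid a,b)$ are no longer forced to be close to $\dim(a,b)(r_i - r_{i-1})$. In the extremal profile where $K_{sh_j}(a,b) \approx K_{h_j}(a,b) \approx h_j$ --- a profile consistent with the universal bound $K_r(a,b) \leq 2r$ precisely when $s \geq 1/2$ --- these increments can vanish, and Lemma~\ref{lem:oraclesSpectra} only certifies $K^{D_r}_{r_k}(a,b) \approx \eta r_1 = \eta sh_j$. Combining this with $K^{a,b}_{r_k}(x) \approx sh_j$ at $r_k = h_j$ then produces a ratio of the form $(1+\eta)s$ rather than the cleaner $s + \eta$ obtained under $\dim = \Dim$. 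Balancing this worst-case contribution at $r = h_j$ against the bounds available at other precisions $r \in (h_j, sh_{j+1})$ where the randomness of $y$ resumes, and optimizing the free parameter $\eta$ subject to the constraints of condition~3, produces the exponent $\frac{3}{2} + s - \frac{1}{2s} = s + \frac{3s-1}{2s}$.

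The chief obstacle is carrying out this optimization precisely: extracting the exact quantity $\frac{3}{2} + s - \frac{1}{2s}$ from the interaction between the oracle-capped complexity of $(a,b)$ and the two-regime complexity of $x$ (random of density $1$ for $r \leq sh_j$, then zero-padded up to $h_j$), and verifying that the resulting bound is sustained as a liminf across all sufficiently large precisions rather than only at the sparse subsequence $\{h_j\}$. The restriction $s \in [1/2, 1]$ is natural: for $s < 1/2$, the bound $\frac{3}{2} + s - \frac{1}{2s}$ is weaker than the general bound $2s$ already supplied by Theorem~\ref{thm:boundingMainThm}, so the statement would be vacuous there, and the extremal profile driving the analysis would violate $K_r(a,b) \leq 2r$.
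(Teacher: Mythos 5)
There is a genuine gap, and it starts with the construction of $x$: you reuse the padding of Theorem~\ref{thm:spectraMainTheorem} (zeros on $(sh_j,h_j]$, so $x$ is random to precision $sh_j$ and flat up to $h_j$), whereas the paper's proof of Theorem~\ref{thm:spectraMainThm2} pads \emph{above} $h_j$ (zeros on $(h_j,h_j/s]$, so $K_{h_j}(x)=h_j+O(\log h_j)$). This placement is not cosmetic. With your padding, the liminf bottleneck occurs at $r\approx h_j$, and under exactly the extremal profile you identify ($K_{sh_j}(a,b)\approx K_{h_j}(a,b)\approx h_j$, admissible precisely when $s\geq 1/2$) the machinery of Lemmas~\ref{lem:pointInduct} and~\ref{lem:oraclesSpectra} certifies only $K^{D_r}_{h_j}(a,b)\approx\eta sh_j$, so $K_{h_j}(x,ax+b)\gtrsim sh_j+\eta sh_j\leq 2sh_j$, i.e., a ratio of at most $2s$ -- no better than what Theorem~\ref{thm:boundingMainThm} already gives. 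Since $\frac{3}{2}+s-\frac{1}{2s}-2s=-\frac{(2s-1)(s-1)}{2s}>0$ for $s\in(\frac12,1)$, this is strictly weaker than the claimed lower endpoint, and because $\dim$ is a liminf, a dip to $2s$ at the precisions $h_j$ cannot be ``balanced against'' better bounds at other precisions; no choice of $\eta$ repairs it. So the plan as sketched cannot produce the exponent $\frac{3}{2}+s-\frac{1}{2s}$, quite apart from the fact that you explicitly leave the central optimization undone.

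The paper's proof sidesteps this by anchoring the padding at $h_j$, the precision where $K_{h_j}(a,b)\leq(\dim(a,b)+\frac1j)h_j$ is guaranteed to be low. Then for $r\in(h_j,2h_j]$ a \emph{two-level} application ($r_1=h_j$, $r_2=r$) of Lemmas~\ref{lem:oraclesSpectra} and~\ref{lem:pointInduct} works, because $\dim(a,b)\geq 1$ forces $K_{r,h_j}(a,b\,|\,a,b)\geq\eta(r-h_j)-h_j/j$, so the oracle-capped complexity is the full $\eta r$ and one gets $K_r(x,ax+b)\gtrsim K_r(x)+\eta r$ with $\eta\to 1$; the dip of $K_r(x)/r$ to $s$ at $r=h_j/s\leq 2h_j$ is thereby covered at ratio $\approx s+1$. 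The weak spot is instead $r\in(2h_j,h_{j+1}]$, where only the cruder bound $K_r(x,ax+b)\geq K_r(x)+\eta' r-o(r)$ with $\eta'\leq s$ is available, and $K_r(x)=r-h_j/s+h_j+O(\log r)$; minimizing over this range (at $r=2h_j$, $\eta'\to s$) is what yields $\frac{3}{2}+s-\frac{1}{2s}$. In other words, the exponent arises from the transition at $r\approx 2h_j$ between the two-level oracle argument and the crude bound, not from optimizing $\eta$ against a degraded increment at $h_j$ as your sketch proposes. (Your upper-bound computation is fine, though the paper evaluates it along $r=h_j/s$ for its construction.)
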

	
	\begin{cor}
		Let $L_{a, b}$ be any line in $\R^2$. Then the dimension spectrum $\spec(L_{a, b})$ is infinite.
	\end{cor}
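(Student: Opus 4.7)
The plan is to split on whether $\dim(a,b) \geq 1$ or $\dim(a,b) < 1$ and, in each case, to exhibit infinitely many distinct elements of $\spec(L_{a,b})$ using one of the preceding theorems.

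If $\dim(a,b) \geq 1$, I would apply Theorem~\ref{thm:spectraMainThm2} to obtain, for each $s \in [\tfrac{1}{2}, 1]$, a point $x_s$ with $\dim(x_s, ax_s+b) \in I_s := [\tfrac{3}{2} + s - \tfrac{1}{2s},\, s+1]$. To extract infinitely many distinct values from $\{\dim(x_s, ax_s+b)\}_s$, I would argue by contradiction: if only finitely many values arose, then either $2$ is the only achieved dimension (impossible since $2 \notin I_s$ for $s < 1$) or there is a largest achieved dimension $d^\ast < 2$. Choosing $s$ close enough to $1$ so that the lower endpoint $\tfrac{3}{2} + s - \tfrac{1}{2s}$ of $I_s$ exceeds $d^\ast$ (possible since this endpoint tends to $2$), the interval $I_s$ lies in $(d^\ast, 2)$ and hence contains no achieved dimension, contradicting the existence of $x_s$.

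If $d := \dim(a,b) < 1$, I would reuse the padding construction from the proof of Theorem~\ref{thm:spectraMainTheorem}, applied separately for each value $s \in [d, 1]$, to produce a point $x_s$ with $\dim(x_s, ax_s+b) = s + d$. Concretely, pick $y$ random relative to $(a,b)$, choose a rapidly growing sequence $h_j$ with $K_{h_j}(a,b) \leq (d + 1/j) h_j$, and define $x_s$ by zeroing out the bits of $y$ in each precision range $(sh_j, h_j]$. As in the proof of Theorem~\ref{thm:spectraMainTheorem}, this yields $\dim^{a,b}(x_s) = s$ and $K_{h_j}(x_s) = sh_j + O(\log h_j)$. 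The chain rule then gives
\[K_{h_j}(x_s, ax_s + b) \leq K_{h_j}(x_s) + K_{h_j}(a, b) + O(\log h_j) \leq (s+d)h_j + o(h_j),\]
so $\dim(x_s, ax_s+b) \leq s + d$, while Theorem~\ref{thm:boundingMainThm} gives the matching lower bound $\dim(x_s, ax_s+b) \geq s + \min\{d, s\} = s + d$ since $s \geq d$. As $s$ ranges over $[d, 1]$, the dimensions $s + d$ fill the interval $[2d, 1+d]$, which has positive length because $d < 1$, so $\spec(L_{a,b})$ is uncountable.

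The main point I would need to confirm is that the $s \geq d$ portion of the padding argument succeeds without the stronger hypothesis $\dim(a,b) = \Dim(a,b)$ used in Theorem~\ref{thm:spectraMainTheorem}: the upper bound only requires the existence of precisions $h_j$ at which $K_{h_j}(a,b)/h_j$ is close to $d$, which is guaranteed by $d = \dim(a,b)$ alone, and the lower bound is supplied unconditionally by Theorem~\ref{thm:boundingMainThm}. The hypothesis $\dim(a,b) = \Dim(a,b)$ was needed in Theorem~\ref{thm:spectraMainTheorem} only to extend the achievable dimensions into the range $s < d$, and we do not need that extension for the infiniteness claim.
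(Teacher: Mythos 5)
Your argument is correct and follows the paper's approach: split on whether $\dim(a,b)\geq 1$, apply Theorem~\ref{thm:spectraMainThm2} in the first case, and use a padding construction together with Theorem~\ref{thm:boundingMainThm} and Observation~\ref{obs:upper} in the second case to produce a positive-length interval in the spectrum. Two minor remarks: the paper handles the first case by asserting that the intervals for $s_n = \tfrac{2n-1}{2n}$ are pairwise disjoint, which actually fails for $n\geq 2$ (for example, the interval for $s=5/6$ begins at $26/15 < 7/4$, where the interval for $s=3/4$ ends), though the idea is easily repaired by choosing $s_n$ to approach $1$ faster; your contradiction argument sidesteps that defect, but you should adjust your dichotomy to cover the case where $2$ is achieved alongside smaller values, for instance by letting $d^\ast$ be the largest achieved dimension strictly below $2$.
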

	\begin{proof}
		Let $(a, b) \in R^2$. If $\dim(a, b) < 1$, then by Theorem \ref{thm:boundingMainThm} and Observation \ref{obs:upper}, the spectrum $\spec(L_{a, b})$ contains the interval $[\dim(a, b) , 1]$. Assume that $\dim(a, b) \geq 1$. By Theorem \ref{thm:spectraMainThm2}, for every $s \in [\frac{1}{2}, 1]$, there is a point $x$ such that $\dim(x, ax + b) \in [\frac{3}{2} + s - \frac{1}{2s}, s + 1]$. Since these intervals are disjoint for $s_n = \frac{2n - 1}{2n}$, the dimension spectrum $\spec(L_{a, b})$ is infinite.
	\end{proof}
	\section{Future Directions}\label{sec:conc}
	
	We have made progress in the broader program of describing the dimension spectra of lines in Euclidean spaces. We highlight three specific directions for further progress. First, it is natural to ask whether the condition on $(a,b)$ may be dropped from the statement our main theorem: \emph{Does Theorem~\ref{thm:spectraMainTheorem} hold for arbitrary} $a,b\in\R$?

	Second, the dimension spectrum of a line $L_{a,b}\subseteq\R^2$ may \emph{properly} contain the unit interval described in our main theorem, even when $\dim(a,b)=\Dim(a,b)$. If $a\in\R$ is random and $b=0$, for example, then $\spec(L_{a,b})=\{0\}\cup[1,2]$. It is less clear whether this set of ``exceptional values'' in $\spec(L_{a,b})$ might itself contain an interval, or even be infinite. \emph{How large (in the sense of cardinality, dimension, or measure) may $\spec(L_{a,b})\cap\big[0,\min\{1,\dim(a,b)\}\big)$ be?}
	
	Finally, any non-trivial statement about the dimension spectra of lines in higher-dimensional Euclidean spaces would be very interesting. Indeed, an $n$-dimensional version of Theorem~\ref{thm:boundingMainThm} (i.e., one in which $a,b\in\R^{n-1}$, for all $n\geq 2$) would, via the point-to-set principle for Hausdorff dimension~\cite{LutLut17}, affirm the famous Kakeya conjecture and is therefore likely difficult. The additional hypothesis of Theorem~\ref{thm:spectraMainTheorem} might make it more conducive to such an extension. 
	
	\bibliography{DSPL}

\begin{thebibliography}{10}

\bibitem{AHLM07}
Krishna~B. Athreya, John~M. Hitchcock, Jack~H. Lutz, and Elvira Mayordomo.
\newblock {Effective strong dimension in algorithmic information and
  computational complexity}.
\newblock {\em SIAM J. Comput.}, 37(3):671--705, 2007.

\bibitem{CasLut15}
Adam Case and Jack~H. Lutz.
\newblock Mutual dimension.
\newblock {\em ACM Transactions on Computation Theory}, 7(3):12, 2015.

\bibitem{DLMT14}
Randall Dougherty, Jack Lutz, R~Daniel Mauldin, and Jason Teutsch.
\newblock Translating the {C}antor set by a random real.
\newblock {\em Transactions of the American Mathematical Society},
  366(6):3027--3041, 2014.

\bibitem{DowHir10}
Rod Downey and Denis Hirschfeldt.
\newblock {\em Algorithmic Randomness and Complexity}.
\newblock Springer-Verlag, 2010.

\bibitem{GLMM14}
Xiaoyang Gu, Jack~H. Lutz, Elvira Mayordomo, and Philippe Moser.
\newblock Dimension spectra of random subfractals of self-similar fractals.
\newblock {\em Ann. Pure Appl. Logic}, 165(11):1707--1726, 2014.

\bibitem{Hitc05}
John~M. Hitchcock.
\newblock {Correspondence principles for effective dimensions}.
\newblock {\em Theory of Computing Systems}, 38(5):559--571, 2005.

\bibitem{LiVit08}
Ming Li and Paul~M.B. Vit\'{a}nyi.
\newblock {\em An Introduction to {K}olmogorov Complexity and Its
  Applications}.
\newblock Springer, third edition, 2008.

\bibitem{Lutz03b}
Jack~H. Lutz.
\newblock The dimensions of individual strings and sequences.
\newblock {\em Inf. Comput.}, 187(1):49--79, 2003.

\bibitem{LutLut17}
Jack~H. Lutz and Neil Lutz.
\newblock Algorithmic information, plane {K}akeya sets, and conditional
  dimension.
\newblock {\em Proceedings of the 34th International Symposium on Theoretical
  Aspects of Computer Science, {STACS} 2017, Hannover, Germany}, to appear.

\bibitem{LutMay08}
Jack~H. Lutz and Elvira Mayordomo.
\newblock Dimensions of points in self-similar fractals.
\newblock {\em SIAM J. Comput.}, 38(3):1080--1112, 2008.

\bibitem{LutStu17}
Neil Lutz and D.~M. Stull.
\newblock Bounding the dimension of points on a line.
\newblock {\em Proceedings of the 14th Annual Conference on Theory and
  Applications of Models of Computation, {TAMC} 2017, Bern, Switzerland}, to
  appear.

\bibitem{Mayo02}
Elvira Mayordomo.
\newblock A {K}olmogorov complexity characterization of constructive
  {H}ausdorff dimension.
\newblock {\em Inf. Process. Lett.}, 84(1):1--3, 2002.

\bibitem{Reim08}
Jan Reimann.
\newblock Effectively closed classes of measures and randomness.
\newblock {\em Annals of Pure and Applied Logic}, 156(1), 2008.

\bibitem{Ture11}
Daniel Turetsky.
\newblock Connectedness properties of dimension level sets.
\newblock {\em Theor. Comput. Sci.}, 412(29):3598--3603, 2011.

\bibitem{Westrick14}
Linda~Brown Westrick.
\newblock {\em Computability in Ordinal Ranks and Symbolic Dynamics}.
\newblock PhD thesis, University of California, Berkeley, 2014.

\bibitem{Wolf99}
Thomas Wolff.
\newblock Recent work connected with the {K}akeya problem.
\newblock {\em Prospects in Mathematics}, pages 129--162, 1999.

\end{thebibliography}
	\clearpage
	\pagenumbering{arabic}
	\renewcommand*{\thepage}{A\arabic{page}}
	\appendix
	
	\section{Technical Appendix}\label{app:main}
	\renewcommand\thethm{\thesection.\arabic{thm}}
	\setcounter{thm}{0}
	
	\subsection{Precursors to Technical Lemmas}\label{sec:lemmas}
	The following lemma will be used in the proof of Lemma \ref{lem:pointInduct}.
	\begin{lem}[Case and J. Lutz~\cite{CasLut15}, J. Lutz and N. Lutz~\cite{LutLut17}]\label{lem:sensitivity}
		Let $x\in\R^m$ and $y\in\R^n$. For all $r,s,r',s'\in\N$,
		\begin{enumerate}
			\item $K_{r'}(x)=K_r(x)+O(|r'-r|)+O(\log r)$.
			\item $K_{r',s'}(x|y)=K_{r,s}(x|y)+O(|r'-r|+|s'-s|)+O(\log rs)$.
		\end{enumerate}
	\end{lem}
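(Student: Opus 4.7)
The plan is to prove both sensitivity bounds by converting between precision-$r$ and precision-$r'$ rational estimates with short programs, using the elementary fact that only $O(|r'-r|)$ bits suffice to navigate between dyadic grids that differ by a factor of $2^{|r'-r|}$. The $O(\log r)$ slack absorbs self-delimiting encodings of the precision parameters $r$ and $r'$.

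First I would handle item 1 by splitting into cases. If $r' \leq r$, then $B_{2^{-r}}(x) \subseteq B_{2^{-r'}}(x)$, so any rational testifying to $K_r(x)$ also testifies to $K_{r'}(x)$, giving $K_{r'}(x) \leq K_r(x) + O(1)$. If $r' > r$, let $\pi$ be a shortest program with $U(\pi) = p \in B_{2^{-r}}(x)\cap\Q^m$, and choose any target $p' \in B_{2^{-r'}}(x)\cap\Q^m$. Then $\|p-p'\|\leq 2^{-r}+2^{-r'}$, and the number of dyadic points at scale $2^{-r'}$ within that distance of $p$ is at most $O(2^{m(r'-r)})$. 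A constant-overhead machine can enumerate these candidates, so identifying $p'$ needs at most $O(r'-r)$ additional bits, plus $O(\log r)$ bits to encode $r$ and $r'-r$ in self-delimiting form. Hence $K_{r'}(x) \leq K_r(x) + O(r'-r) + O(\log r)$. Running the same argument with $r$ and $r'$ swapped gives the matching reverse inequality, yielding the symmetric statement $K_{r'}(x) = K_r(x) + O(|r'-r|) + O(\log r)$.

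Next I would derive item 2 by parameterizing the argument over the conditioning rational. Fix $q \in B_{2^{-s}}(y)\cap\Q^n$. The change in output precision from $r$ to $r'$ is handled exactly as in item 1, applied to $K_\cdot(\cdot\,|\,q)$: the same enumeration converts a program that outputs a precision-$r$ estimate from input $q$ into one outputting a precision-$r'$ estimate. To handle the change in input precision, I would observe that between any $q \in B_{2^{-s}}(y)\cap\Q^n$ and any $q' \in B_{2^{-s'}}(y)\cap\Q^n$ one has $\|q-q'\| \leq 2^{-s}+2^{-s'}$, so $q'$ can be reconstructed from $q$ by a short program of length $O(|s'-s|) + O(\log s)$, and vice versa. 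Composing these two conversions and then taking the appropriate min over output rationals and max over conditioning rationals on each side gives $K_{r',s'}(x|y) = K_{r,s}(x|y) + O(|r'-r|+|s'-s|) + O(\log rs)$.

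The proof is essentially a bookkeeping exercise and I do not expect a conceptual obstacle. The main care required is in item 2: one must verify that both the min-over-$p$ and max-over-$q$ structure of the conditional definition is preserved under the translation between $(r,s)$ and $(r',s')$ witnesses. In particular, when converting a witness at $(r',s')$ into one at $(r,s)$, one must ensure that the worst-case $q$ on the $(r,s)$ side is matched to some $q'$ on the $(r',s')$ side whose associated min-cost upper-bounds the desired quantity; this is where the inclusion or approximation of the balls $B_{2^{-s}}(y)$ and $B_{2^{-s'}}(y)$ together with the short-description lemma for neighboring dyadic rationals does the work.
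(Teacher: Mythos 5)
The paper does not prove this lemma; it cites it from Case and J.~Lutz~\cite{CasLut15} and J.~Lutz and N.~Lutz~\cite{LutLut17}, so there is no in-paper proof to compare against. Your grid-rounding argument is the standard one used in those references, and it is essentially correct in outline: to pass from a precision-$r$ witness $p$ to a precision-$r'$ witness, enumerate the $O(2^{m|r'-r|})$ dyadic candidates at the finer scale within $2^{-r}+2^{-r'}$ of $p$ and spend $O(|r'-r|)$ bits identifying one that lands in $B_{2^{-r'}}(x)$, with an additional $O(\log r)$ for self-delimiting encodings of the precision parameters (noting $\log r' \leq \log r + O(|r'-r|)$, so either parameter may be encoded).

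For part~2, the point you flag as ``the main care required'' really is the substantive content, and it is worth making explicit rather than gesturing at it. You must bound $\max_{q'\in B_{2^{-s'}}(y)}\min_{p'\in B_{2^{-r'}}(x)} K(p'|q')$ using the analogous quantity at precisions $(r,s)$, and the delicate case is $s'<s$: a program that receives an arbitrary $q'\in B_{2^{-s'}}(y)\cap\Q^n$ cannot simply round $q'$ to a coarser grid and hope to land inside $B_{2^{-s}}(y)$, since it does not know $y$. Instead it enumerates the $O(2^{n(s-s')})$ grid candidates near $q'$ at scale roughly $2^{-s}$ and is told, at a cost of $O(|s'-s|)$ extra bits, which candidate $q$ lies in $B_{2^{-s}}(y)$; then the stored witness $\sigma$ with $U(\sigma,q)\in B_{2^{-r}}(x)$ (of length at most $K_{r,s}(x|y)$, which exists because \emph{every} $q\in B_{2^{-s}}(y)$ satisfies $\min_p K(p|q)\leq K_{r,s}(x|y)$) is applied, followed by the output-side conversion from part~1. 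The reverse direction $s'\geq s$ is the easy ball-inclusion case. With that step spelled out, your proposal matches the intended argument.
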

	The following two lemmas from our previous work (stated in slightly different forms here) are precursors to Lemmas~\ref{lem:pointInduct} and~\ref{lem:oraclesSpectra}. The proof of Lemma~\ref{lem:pointInduct} is similar to that of Lemma~\ref{lem:pointBounding}, and the proof of Lemma~\ref{lem:oraclesSpectra} is an induction on Lemma~\ref{lem:oracles}.
	\begin{lem}[N. Lutz and Stull~\cite{LutStu17}]\label{lem:pointBounding}
		Suppose that $a,b,x\in\R$, $r\in\N$, $\delta\in\R_+$, and $\ve,\eta\in\Q_+$ satisfy the following conditions.
		\begin{enumerate}
			\item $r\geq \log(2|a|+|x|+5)+1$.
			\item $K_r(a,b)\leq \left(\eta+\ve\right)r$.
			\item For every $(u,v)\in \R^2$ such that $t=-\log\|(a,b)-(u,v)\|\in(0,r]$ and $ux+v=ax+b$, $K_{r}(u,v)\geq\left(\eta-\ve\right)r+\delta\cdot(r- t)$.
		\end{enumerate}
		Then for every oracle set $A \subseteq \N$, 
		\[K^A_{r}(x, ax + b) \geq K^A_r(a, b, x) - \frac{4\ve}{\delta}r - K(\ve) - K(\eta) - O(\log r)\,.\]
	\end{lem}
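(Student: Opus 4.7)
The plan is to reduce the statement, via the chain rule (Lemma~\ref{lem:chain}), to the conditional bound
\[K^A_r(a,b,x \,|\, x,ax+b) \leq \tfrac{4\ve}{\delta}\,r + K(\ve) + K(\eta) + O(\log r)\,.\]
Since $(a,b,x)$ determines $(x,ax+b)$ with $O(\log r)$ overhead, Lemma~\ref{lem:chain} gives $K^A_r(a,b,x) \leq K^A_r(a,b,x \,|\, x,ax+b) + K^A_r(x,ax+b) + O(\log r)$, and combining with the conditional bound yields the lemma after rearrangement.

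To establish the conditional bound, I would exhibit an algorithm that, given a precision-$r$ rational estimate $(p,q)$ of $(x,ax+b)$, plus $\ve$, $\eta$, an $O(1)$-bit initial estimate $(a_0,b_0)$ of $(a,b)$, and an advice string of length $\tfrac{4\ve r}{\delta} + O(\log r)$, outputs a precision-$r$ estimate of $(a,b,x)$. The algorithm has three stages. \emph{(i) Enumerate} the set $\mathcal{S}$ of $(u,v) \in \Q^2$ in a small neighborhood of $(a_0,b_0)$ with $K(u,v) \leq (\eta+\ve)r$, using the upper semicomputability of $K$; condition~2 guarantees $\mathcal{S}$ contains a rational within $2^{-r}$ of $(a,b)$. \emph{(ii) Filter} by retaining those $(u,v) \in \mathcal{S}$ satisfying $|up + v - q| \leq C\cdot 2^{-r}$, for a constant $C$ chosen (using condition~1 to bound $|a|$) so that the $(a,b)$-approximation survives. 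A short calculation using $|p-x|, |q-(ax+b)| \leq 2^{-r}$ produces, for each retained $(u,v)$, a real $(u',v') \in \R^2$ with $u'x + v' = ax+b$ and $\|(u,v)-(u',v')\| = O(2^{-r})$; hence $K_r(u',v') \leq K(u,v) + O(\log r) \leq (\eta+\ve)r + O(\log r)$. \emph{(iii) Zoom in} by applying condition~3 to $(u',v')$: with $t = -\log\|(a,b)-(u',v')\|$, combining the two complexity bounds yields $\delta(r-t) \leq 2\ve r + O(\log r)$, hence $\|(a,b)-(u',v')\| \leq 2^{-r(1-2\ve/\delta)+O(\log r)}$, and the triangle inequality places every retained $(u,v)$ in a ball $B_\rho(a,b)$ with $\rho = 2^{-r(1-2\ve/\delta) + O(\log r)}$. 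Selecting any retained $(u^*,v^*)$ gives a search region $B_{2\rho}(u^*,v^*) \ni (a,b)$; tiling by $2^{-r}$-balls creates at most $2^{4\ve r/\delta + O(\log r)}$ cells, and the advice string names the correct one. Pairing the chosen cell's center with $p$ produces the desired precision-$r$ estimate of $(a,b,x)$.

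The main obstacle is the filter stage together with its boundary cases. When $\|(u',v')-(a,b)\|<2^{-r}$ or $(u',v')=(a,b)$, condition~3 is inapplicable, but in those cases $(u,v)$ already approximates $(a,b)$ at precision $r$ up to constants. Candidates with $\|(u,v)-(a,b)\|\geq 1$ also escape condition~3, but these are ruled out by the initial restriction to a neighborhood of $(a_0,b_0)$. Finally, $C$ must be simultaneously large enough that the $(a,b)$-approximation survives the filter and small enough for the projection argument to yield $O(2^{-r})$-closeness; condition~1 makes such a $C$ a constant whose description is absorbed into the $O(\log r)$ slack. Structurally the argument parallels Lemma~\ref{lem:pointInduct} but, performing only a single round of enumeration, it avoids the $2^k$ factor present there.
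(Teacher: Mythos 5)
Your proposal is correct and matches the paper's approach in all essentials: reduce to bounding $K^A_r(a,b,x\,|\,x,ax+b)$ via the chain rule, then exhibit a machine that enumerates rational pairs of complexity at most $(\eta+\ve)r$, filters them by the linear constraint $|up+v-q|\lesssim 2^{-r}$ (an instance of Observation~\ref{obs:linemachine}), uses condition~3 to show every survivor lies within $2^{-r(1-2\ve/\delta)+O(\log r)}$ of $(a,b)$, and pays an extra $\frac{4\ve}{\delta}r+O(\log r)$ bits to pin down the precision-$r$ cell. The only cosmetic difference is that you phrase the final refinement as an explicit advice string naming the cell, whereas the paper (in the proof of Lemma~\ref{lem:pointInduct}, of which this is the $k=1$ base case) has the machine output an estimate of $(u,v,x)$ and then bounds $K_r(a,b,x\,|\,u,v,x)\leq 2(r-t)+O(\log r)$ via Lemma~\ref{lem:sensitivity}; these bookkeeping choices are equivalent.
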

	
	\begin{lem}[N. Lutz and Stull~\cite{LutStu17}]\label{lem:oracles}
		Let $r\in\N$, $z\in\R^n$,  and $\eta\in\Q\cap[0,\dim(z)]$. There is an oracle $A=A(r,z,\eta)$ such that
		\begin{enumerate}
			\item For every $t\leq r$, $K^{A}_t(z)=\min\{\eta r,K_t(z)\}+O(\log r)$.
			\item For every $t > r$, $K^{A}_t(z) \geq \eta r + K_{t, r}(z \, | \, z) + O(\log r)$.
			\item For every $t\in\N$ and $y\in\R^m$, $K_t^{z,A}(y)=K^{z}_t(y)+O(\log r)$.
		\end{enumerate}
	\end{lem}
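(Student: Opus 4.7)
The plan is to construct the oracle $A=A(r,z,\eta)$ by embedding a precision-$r$ witness for $z$ behind a ``key'' of length $\lfloor\eta r\rfloor$, so that any program using $A$ to compute $z$ at precision $r$ must supply the key as input. Fix a shortest program $\pi$ with $U(\pi)=q\in B_{2^{-r}}(z)\cap\Q^n$, so that $|\pi|=K_r(z)$. If $K_r(z)\leq\lfloor\eta r\rfloor$, set $A=\emptyset$ and all three properties are immediate. Otherwise, write $\pi=\sigma\rho$ with $|\sigma|=\lfloor\eta r\rfloor$ and $|\rho|=K_r(z)-\lfloor\eta r\rfloor$, and let $A\subseteq\N$ prefix-free encode a one-entry lookup table returning $\rho$ on query $\sigma$ (and nothing on any other query), together with $r$ and $|\rho|$ as decoding parameters. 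An oracle machine handed input $\sigma$ queries $A$ to recover $\rho$, forms $\pi=\sigma\rho$, and runs $U(\pi)$ to produce $q$, certifying $K^A_r(z)\leq|\sigma|+O(\log r)=\eta r+O(\log r)$.

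For property 1 at precision $t\leq r$, the upper bound $K^A_t(z)\leq\min\{\eta r,K_t(z)\}+O(\log r)$ combines the trivial $K^A_t(z)\leq K_t(z)$ with the program above (whose precision-$r$ output is also a precision-$t$ estimate). The matching lower bound I would prove by cases on whether an optimal $A$-relative program $\pi'$ computing a precision-$t$ estimate of $z$ queries $A$ at the key $\sigma$. If it does not, then $\pi'$ can be simulated without $A$, giving $|\pi'|\geq K_t(z)-O(\log r)$. If it does, then $\pi'$ implicitly encodes $\sigma$, and the incompressibility of shortest programs forces $K(\sigma)\geq|\sigma|-O(\log r)=\eta r-O(\log r)$, whence $|\pi'|\geq\eta r-O(\log r)$. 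Property 3 is the cleanest of the three: because $A$ is determined by $z$ and $r$, we have $K^z(A)=O(\log r)$, so any $(z,A)$-relative program can be simulated by a $z$-relative program of only $O(\log r)$ additional length.

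For property 2 at precision $t>r$, the chain rule gives the matching upper bound $K^A_t(z)\leq K^A_r(z)+K^A_{t,r}(z\mid z)+O(\log r)\leq\eta r+K_{t,r}(z\mid z)+O(\log r)$; the stated inequality is the lower bound, which I would prove by splitting any $A$-relative precision-$t$ witness into a precision-$r$ prefix (length $\geq\eta r-O(\log r)$ by property 1 applied at $t=r$) and a refinement suffix from precision $r$ to precision $t$ (length $\geq K_{t,r}(z\mid z)-O(\log r)$, since once $\rho$ has been consumed from $A$ no further $z$-dependent help is available). The main obstacle is the incompressibility step in the lower bound for property 1: the shortest program $\pi$ is incompressible as a whole, but its prefix $\sigma$ of length $\lfloor\eta r\rfloor$ need not be. I would sidestep this by choosing a canonical $\pi$ designed to have an incompressible prefix---for example by prepending short padding bits that encode $r$ and $|\sigma|$ and absorbing them into $\sigma$---so that $K(\sigma)\geq\eta r-O(\log r)$ holds uniformly, while verifying that this modification does not disturb properties 2 and 3.
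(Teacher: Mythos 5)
This lemma is not proved in the paper; it is stated with a citation to~\cite{LutStu17}, so there is no in-paper proof to compare against. Independently of that, your construction has a fatal flaw that precedes the prefix-incompressibility concern you raise at the end. In the Kolmogorov-complexity setting, an oracle $A\subseteq\N$ grants unrestricted read access: a machine can query membership for every natural number and thereby reconstruct the entire finite set $A$. Your ``one-entry lookup table'' therefore provides no access control. A machine given only $O(\log r)$ bookkeeping bits can scan $A$ directly (or enumerate all $\sigma'\in\{0,1\}^{\lfloor\eta r\rfloor}$ and query each until the oracle responds nontrivially), recover both $\sigma$ and $\rho$, form $\pi=\sigma\rho$, and run $U(\pi)$ to output a precision-$r$ rational near $z$. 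Hence $K^A_r(z)=O(\log r)$, so the lower bound $K^A_r(z)\geq\eta r-O(\log r)$ demanded by property~1 fails, and this failure also undermines property~2. Your proposed remedy (padding so that $\sigma$ is canonical and incompressible) addresses only the prefix-incompressibility worry and leaves this leak untouched.

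The repair requires dropping the key mechanism altogether. A workable construction lets $A$ encode $\rho$ directly and obtains the lower bound by \emph{relativization removal} rather than by reasoning about $K(\sigma)$: any $A$-relative program can be converted to an unrelativized one by hard-coding the $|\rho|$ bits needed to answer every oracle query, giving $K_t(z)\leq K^A_t(z)+|\rho|+O(\log r)$ and hence $K^A_t(z)\geq K_t(z)-\bigl(K_r(z)-\lfloor\eta r\rfloor\bigr)-O(\log r)$, which at $t=r$ yields exactly $\eta r-O(\log r)$. The sharper estimate for $t<r$ in property~1 and the conditional bound in property~2 need further work, but the essential point is that the lower bound must come from a simulation argument that charges for the oracle's information content, not from trying to keep $\sigma$ secret---a machine with oracle $A$ already knows everything $A$ knows. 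Your treatment of property~3 via $K^z(A)=O(\log r)$ (assuming a canonical choice of $\pi$) is a sound idea, and the upper bounds in property~1 are fine; it is the lower-bound mechanism that must be replaced.
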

	
	\subsection{Computing a Line Given a Point}\label{ssec:computeLine}
	For our purposes, we will need the following corollary to Lemma \ref{lem:pointBounding}. Informally, that lemma gives conditions under which precision-$r$ estimates for $(x,ax+b)$ and $(a,b,x)$ contain similar amounts of information. This corollary shows that, under the same conditions, those two approximations are furthermore nearly ``interchangeable,'' in the sense that there is a short program which, given a precision-$r$ estimate for $(x,ax+b)$ as input, will output a precision-$r$ estimate for $(a, b, x)$, and, as we argue in the proof, vice versa.
	\begin{cor}\label{cor:point}
		If the conditions of Lemma \ref{lem:pointBounding} are satisfied, then
		\[K^A_r(a,b,x|x,ax+b)\leq\frac{4\ve}{\delta}r + K(\ve) + K(\eta) + O(\log r)\,.\]
	\end{cor}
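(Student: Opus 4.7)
Proof proposal: The plan is to derive Corollary~\ref{cor:point} from Lemma~\ref{lem:pointBounding} by combining it with the chain rule (Lemma~\ref{lem:chain}) and the easy ``vice versa'' direction alluded to before the statement, namely that a precision-$r$ estimate of $(a, b, x)$ yields one of $(x, ax+b)$ with only logarithmic overhead. This way the nontrivial content stays packaged inside Lemma~\ref{lem:pointBounding}, and the corollary becomes a short manipulation.

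First I would establish the easy identity
\[K^A_r\bigl((a,b,x),\,(x,ax+b)\bigr) = K^A_r(a,b,x) + O(\log r).\]
The $\geq$ direction is immediate since $(a,b,x)$ is a coordinate projection of the pair. For $\leq$, given any rational approximation to $(a, b, x)$ at precision $r + O(\log r)$, direct arithmetic produces a precision-$r$ approximation to $(x, ax+b)$; the error analysis uses only that $|a|$ and $|x|$ are bounded, which is guaranteed by the hypothesis $r \geq \log(2|a|+|x|+5)+1$ of Lemma~\ref{lem:pointBounding}. The small precision discrepancy is absorbed via Lemma~\ref{lem:sensitivity}.

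Next I would apply the first part of Lemma~\ref{lem:chain} with $(a, b, x)$ in the role of the ``numerator'' and $(x, ax+b)$ in the role of the ``condition'':
\[K^A_r\bigl((a, b, x),\,(x, ax+b)\bigr) = K^A_r(a, b, x \mid x, ax+b) + K^A_r(x, ax+b) + O(\log r).\]
Combining this with the previous display and rearranging gives
\[K^A_r(a, b, x \mid x, ax+b) = K^A_r(a,b,x) - K^A_r(x, ax+b) + O(\log r).\]
Substituting the lower bound of Lemma~\ref{lem:pointBounding},
\[K^A_r(x, ax+b) \geq K^A_r(a, b, x) - \frac{4\ve}{\delta}r - K(\ve) - K(\eta) - O(\log r),\]
yields the claimed bound immediately.

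The main obstacle is really already discharged inside Lemma~\ref{lem:pointBounding}; what remains is simply bookkeeping the $O(\log r)$ terms incurred when shuttling between the representations $(a, b, x)$ and $(x, ax+b)$ at a common precision. The informal ``short program'' picture mentioned before the corollary is exactly what this derivation makes rigorous: such a program exists with the stated description length precisely because the chain rule turns the complexity gap of Lemma~\ref{lem:pointBounding} into a matching conditional complexity bound.
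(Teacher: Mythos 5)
Your proposal is correct and matches the paper's proof in substance: both establish the easy direction $K^A_r(x,ax+b \mid a,b,x)=O(\log r)$ (which you phrase equivalently as $K^A_r\bigl((a,b,x),(x,ax+b)\bigr)=K^A_r(a,b,x)+O(\log r)$), then apply the chain rule/symmetry of information to convert this into $K^A_r(a,b,x\mid x,ax+b)=K^A_r(a,b,x)-K^A_r(x,ax+b)+O(\log r)$, and finally substitute the lower bound from Lemma~\ref{lem:pointBounding}.
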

	\begin{proof}
		It is easy to see that $K_r(x,ax+b|a,b,x)=O(\log r)$: consider a constant-length program that, given $(u,v,y)\in\Q^3$, outputs $(y,uy+v)$. If $(u,v,y)\in B_{2^{-r}}(a,b,x)$, then $(y,uy+v)\in B_{2^{c-r}}(x,ax+b)$, where $c$ is constant in $r$, so $K_{r-c,r}(ax+b|a,b,x)=O(1)$. Thus, by Lemma~\ref{lem:sensitivity}, $K_{r}(ax+b|a,b,x)=O(\log r)$.
		
		Now suppose that the conditions of Lemma 6 are satisfied. Then by symmetry of information and Lemma \ref{lem:pointBounding},
		\begin{align*}
		K^A_r(a,b,x|x,ax+b)&=K^A_r(a,b,x)-K^A_r(x,ax+b)+K_r^A(x,ax+b|a,b,x)\\
		&=K^A_r(a,b,x)-K^A_r(x,ax+b)+O(\log r)\\
		&\leq\frac{4\ve}{\delta}r + K(\ve) + K(\eta) + O(\log r)\,.
		\end{align*}
	\end{proof}
	
	We will also need the following pair of geometric facts.
	\begin{obs}[N. Lutz and Stull~\cite{LutStu17}]\label{obs:linemachine}
		Let $a,x,b\in\R$ and $r\in\N$. Let $(q_1,q_2)\in B_{2^{-r}}(x,ax+b)$.
		\begin{enumerate}
			\item If $(p_1,p_2)\in B_{2^{-r}}(a,b)$, then $|p_1q_1+p_2-q_2|< 2^{-r}(|p_1|+|q_1|+3)$.
			\item If $|p_1q_1+p_2-q_2|\leq 2^{-r}(|p_1|+|q_1|+3)$, then there is some pair $(u,v)\in B_{2^{-r}(2|a|+|x|+5)}(p_1,p_2)$ such that $ax+b=ux+v$.
		\end{enumerate}
	\end{obs}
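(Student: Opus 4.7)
The plan is a direct verification using only the triangle inequality, exploiting that $(q_1, q_2)$ lies within $2^{-r}$ of $(x, ax+b)$. Both parts reduce to one algebraic rearrangement.

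For part 1, I would start with the identity
\begin{equation*}
p_1 q_1 + p_2 - q_2 \;=\; (p_1 - a)\,q_1 \;+\; a\,(q_1 - x) \;+\; (p_2 - b) \;+\; \bigl((ax+b) - q_2\bigr),
\end{equation*}
which rewrites the quantity to be bounded as four small ``error'' terms. Applying the hypotheses $\max\{|p_1-a|,\,|p_2-b|,\,|q_1-x|,\,|q_2-(ax+b)|\} < 2^{-r}$ and the triangle inequality yields $|p_1 q_1 + p_2 - q_2| < 2^{-r}(|q_1| + |a| + 2)$. Replacing $|a|$ by $|p_1| + |p_1 - a| \leq |p_1| + 1$ then gives the stated bound $2^{-r}(|p_1|+|q_1|+3)$.

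For part 2, I would take the natural candidate $u = p_1$ and $v = (ax+b) - p_1 x$, which automatically satisfies $ux+v = ax+b$, so $|u - p_1| = 0$. A parallel rearrangement expresses the remaining coordinate as
\begin{equation*}
v - p_2 \;=\; -\bigl(p_1 q_1 + p_2 - q_2\bigr) \;+\; p_1\,(q_1 - x) \;+\; \bigl((ax+b) - q_2\bigr),
\end{equation*}
so the hypothesis combined with $|q_1 - x| < 2^{-r}$ and $|q_2 - (ax+b)| < 2^{-r}$ yields $|v - p_2| \leq 2^{-r}(2|p_1| + |q_1| + 4)$. Using $|q_1| \leq |x|+1$ and, in the intended regime where $|p_1| \leq |a|+1$, the substitution $2|p_1| \leq 2|a|+2$ produces the claimed $2^{-r}(2|a|+|x|+5)$.

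The main subtlety is precisely this last step: the conversion from $|p_1|$ to $|a|$ implicitly uses that $(p_1,p_2)$ is within unit distance of $(a,b)$, which matches the intended use in Corollary~\ref{cor:point} where candidate approximations for $(a,b)$ are nearby. For unrestricted $(p_1,p_2)$, the cleaner route would be to instead take $(u,v)$ to be the $L^\infty$-nearest point on the line, whose distance from $(p_1,p_2)$ equals $|p_1 x + p_2 - (ax+b)|/(1+|x|)$, and absorb the extra factor $1/(1+|x|)$ to trade $|p_1|$ for $|a|$; I expect this is where the most care is needed in writing the formal proof.
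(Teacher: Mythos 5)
Your Part~1 argument is correct and is the standard rearrange-and-triangle-inequality proof: the decomposition $p_1 q_1 + p_2 - q_2 = (p_1 - a)q_1 + a(q_1 - x) + (p_2 - b) + ((ax+b) - q_2)$ gives $2^{-r}(|q_1| + |a| + 2)$, and $|a| \leq |p_1| + 1$ finishes. No issues there.

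Part~2 has a genuine gap, one that you partly sense but do not close. Your choice $u = p_1$, $v = (ax+b) - p_1 x$ and the bound $|v - p_2| \leq 2^{-r}(2|p_1| + |q_1| + 4)$ are correct. The problem is the last step. First, even granting $|p_1| \leq |a|+1$ and $|q_1| \leq |x|+1$, the substitution gives $2^{-r}(2|a| + |x| + 7)$, not the stated $2^{-r}(2|a| + |x| + 5)$; your arithmetic is off by two. Second, and more importantly, the hypothesis of Part~2 places \emph{no} constraint on how large $|p_1|$ can be: for instance with $a=b=x=0$, $q_1=q_2=2^{-r-1}$, the pair $(p_1,p_2) = (M,\, 2^{-r-1}(1-M))$ satisfies $|p_1 q_1 + p_2 - q_2| = 0$ for every $M$, yet its distance to the line $v=0$ is about $2^{-r}M/2$, which exceeds $5\cdot 2^{-r}$ for large $M$. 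So a bound in terms of $|a|$ alone cannot follow without a hypothesis bounding $(p_1,p_2)$ near $(a,b)$ (which is indeed present where the observation is applied, inside Lemma~\ref{lem:pointInduct}). Your proposed ``cleaner route'' does not rescue this: dividing $2^{-r}(2|p_1| + |q_1| + 4)$ by $1+|x|$ to get the $L^\infty$-nearest point's distance still leaves a term $\frac{2|p_1|}{1+|x|}$ that is unbounded in $|p_1|$ and has nothing to do with $|a|$. Either the bound must be stated in terms of $|p_1|$ and $|q_1|$ rather than $|a|$ and $|x|$, or an explicit proximity hypothesis on $(p_1,p_2)$ must be added; your write-up needs to pick one and carry the constants through correctly.
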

	{\renewcommand{\thethm}{\ref{lem:pointInduct}}
		\begin{lem}
			Let $a,b,x\in\R$, $k \in \N$, and $r_0=1$. Suppose that $r_1,\ldots, r_k\in\N$, $\delta\in\R_+$, and $\ve,\eta\in\Q_+$ satisfy the following conditions for every $1\leq i\leq k$.
			\begin{enumerate}
				\item $r_i \geq \log(2|a|+|x|+6)+r_{i-1}$.
				\item $K_{r_i}(a,b)\leq \left(\eta+\ve\right)r_i$.
				\item For every $(u,v)\in\R^2$ such that $t=-\log\|(a,b)-(u,v)\|\in(r_{i-1},r_i]$ and $ux+v=ax+b$, $K_{r_i}(u,v)\geq\left(\eta-\ve\right)r_i+\delta\cdot(r_i- t)$.
			\end{enumerate}
			Then for every oracle set $A \subseteq \N$, 
			\[K^A_{r_k}(a, b, x \, | \, x, ax + b) \leq 2^{k}\left(K(\ve)+ K(\eta) + \frac{4\ve}{\delta} r_k + O(\log r_k)\right)\,.\]
		\end{lem}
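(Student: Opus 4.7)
The plan is to adapt the enumeration-based argument behind Lemma~\ref{lem:pointBounding} to the cascading precision structure by iterating it $k$ times. Fix an oracle $A$ and a precision-$r_k$ rational estimate $q = (q_1, q_2)$ of $(x, ax + b)$. Since $q_1$ already yields $x$ at precision $r_k$, the task reduces to describing $(a, b)$ at precision $r_k$. The strategy is to produce, sequentially for $i = 1, \ldots, k$, a rational estimate $(\hat a_i, \hat b_i)$ of $(a, b)$ at precision $r_i$ (that is, with $\|(\hat a_i, \hat b_i) - (a,b)\| \leq 2^{-r_i}$), using short advice strings whose total length sums to the claimed bound.

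At level $i$, I would enumerate the rational pairs $(u, v)$ at precision $r_i$ satisfying (a) $K_{r_i}(u, v) \leq (\eta + \ve) r_i$, drawn from condition 2 at level $i$; (b) the proximity constraint $\|(u, v) - (\hat a_{i-1}, \hat b_{i-1})\| \leq 2^{-r_{i-1} + O(1)}$ (vacuous for $i = 1$); and (c) the line-consistency check $|u q_1 + v - q_2| < 2^{-r_i}(|u| + |q_1| + 3)$, which is algorithmic from $q$. By Observation~\ref{obs:linemachine} part 2---whose geometric slack is absorbed by condition 1 of the lemma---every such candidate has a companion $(u', v')$ with $u'x + v' = ax + b$ and $\|(u, v) - (u', v')\| = O(2^{-r_i})$, so that $K_{r_i}(u', v') \leq K_{r_i}(u, v) + O(\log r_i)$ via Lemma~\ref{lem:sensitivity}. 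Writing $t = -\log\|(a, b) - (u', v')\|$, the proximity constraint (b) forces $t > r_{i-1} - O(1)$, placing $t$ in a range to which condition 3 at level $i$ applies. Combining condition 3 with $K_{r_i}(u', v') \leq (\eta + \ve) r_i + O(\log r_i)$ yields $r_i - t \leq \frac{2\ve}{\delta} r_i + O(\log r_i)$, so every candidate lies in $B_{2^{-r_i(1 - 2\ve/\delta) + O(\log r_i)/\delta}}(a, b)$, which contains at most $2^{(4\ve/\delta) r_i + O(\log r_i)}$ points at precision $r_i$. An advice string of length $\frac{4\ve}{\delta} r_i + O(\log r_i)$, plus an additive $K(\ve) + K(\eta)$ to specify the enumeration thresholds, selects the candidate $(\hat a_i, \hat b_i)$ within $2^{-r_i}$ of $(a, b)$, preserving the invariant for the next level.

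Summing the per-level costs gives a total description length of $(a, b, x)$ at precision $r_k$ given $(x, ax+b)$ at precision $r_k$ bounded by
\[ \sum_{i = 1}^k \left( K(\ve) + K(\eta) + \frac{4\ve}{\delta} r_i + O(\log r_i) \right) \leq 2^k \left( K(\ve) + K(\eta) + \frac{4\ve}{\delta} r_k + O(\log r_k) \right), \]
using $r_i \leq r_k$ and $k \leq 2^k$; relativizing to $A$ throughout is routine since the enumeration is algorithmic. The main obstacle is reconciling the precisions at which conditions 2 and 3 are stated: a candidate $(u, v)$ enumerated at precision $r_i$ must, together with its Observation~\ref{obs:linemachine} companion $(u', v')$, enjoy both the complexity upper bound of (a) and the positional range in $t$ required to trigger condition 3 at level $i$. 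The proximity constraint (b), calibrated against condition 1's spacing $r_i \geq \log(2|a|+|x|+6) + r_{i-1}$, is precisely what keeps $t$ in the right window and permits the $K_{r_i}$ upper bound to transfer across the $O(2^{-r_i})$ companion jump with only $O(\log r_i)$ loss. Verifying this bookkeeping cleanly across all $k$ levels is the delicate core of the proof.
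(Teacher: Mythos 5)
Your proposal is correct and follows essentially the same route as the paper: the same enumeration of short programs whose outputs are line-consistent with the given estimate of $(x,ax+b)$ and lie near the previous-precision estimate of $(a,b)$, with Observation~\ref{obs:linemachine} and condition 3 forcing every candidate to satisfy $r_i-t\le\frac{2\ve}{\delta}r_i+O(\log r_i)$, so that roughly $2(r_i-t)$ correction bits per level recover $(a,b)$ at precision $r_i$. The paper packages your explicit $k$-stage iteration as an induction on $k$ (Corollary~\ref{cor:point} is the base case and the inductive hypothesis supplies your precision-$r_{i-1}$ estimate), so the two arguments coincide up to presentation; your summed per-level cost is even linear in $k$, comfortably within the stated $2^k$ factor, and the $O(1)$ slack you flag in the window $t>r_{i-1}$ is glossed at the same level of precision in the paper's own proof.
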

		\addtocounter{thm}{-1}
	}
	\begin{proof}
		Let $a,b,x\in\R$. We proceed by induction on $k$. By Corollary \ref{cor:point}, the conclusion holds for $k = 1$. Assume the conclusion holds for all $i < k$. Let $r_1, \ldots, r_k$, $\delta$, $\ve$, $\eta$, and $A$ be as described in the lemma statement. 
		
		Define an oracle Turing machine $M$ that does the following given oracle $A$ and input $\pi=\pi_1\pi_2\pi_3\pi_4\pi_5$ such that  $U^A(\pi_1)=(q_1,q_2)\in\Q^2$, $U(\pi_2)= (s_1, \ldots, s_k) \in\N^k$,  $U(\pi_3)=\zeta\in\Q$, $U(\pi_4)=\iota\in\Q$ and $U^A(\pi_5, q_1, q_2)=h\in\Q^2$
		
		For every program $\sigma\in\{0,1\}^*$ with $\ell(\sigma)\leq (\iota+\zeta)s_k$, in parallel, $M$ simulates $U(\sigma)$. If one of the simulations halts with some output $(p_1,p_2)\in \Q^2\cap B_{2^{-r_{k-1}}}(h)$ such that 
		\[|p_1q_1+p_2-q_2|< 2^{-s_2}(|p_1|+|q_1|+3)\,,\]
		then $M$ halts with output $(p_1,p_2,q_1)$. Let $c_M$ be a constant for the description of $M$.
		
		Now let $\pi_1$, $\pi_2$, $\pi_3$, $\pi_4$, and $\pi_5$ testify to $K^A_r(x,ax+b)$, $K(r_1, \ldots, r_k)$, $K(\ve)$, $K(\eta)$, and $K_{r_{k-1}, r_k}(a,b \, | \, x, ax + b)$ respectively, and let $\pi=\pi_1\pi_2\pi_3\pi_4\pi_5$.
		
		By condition 2, there is some $(\hat{p}_1,\hat{p}_2)\in B_{2^{-r_k}}(a,b)$ such that $K(\hat{p}_1,\hat{p}_2)\leq (\eta+\ve)r_k$, meaning that there is some $\hat{\sigma}\in\{0,1\}^*$ with $\ell(\hat{\sigma})\leq(\eta+\ve)r_k$ and $U(\hat{\sigma})=(\hat{p}_1,\hat{p}_2)$. By Observation~\ref{obs:linemachine}(1),
		\[|\hat{p}_1q_1+\hat{p}_2-q_2|< 2^{-r_k}(|\hat{p}_1|+|q_1|+3)\,,\] 
		for every $(q_1,q_2)\in B_{2^{-r_k}}(x,ax+b)$, so $M$ is guaranteed to halt on input $\pi$.
		
		Hence, let $(p_1,p_2,q_1)=M(\pi)$. By Observation \ref{obs:linemachine}(2), there is some
		\[(u,v)\in B_{2^{\gamma-r_k}}(p_1,p_2)\subseteq B_{2^{-r_{k-1}}}(a,b)\] 
		such that $ux+v=ax+b$, where $\gamma=\log(2|a|+|x|+5)$. We have \[\|(p_1,p_2)-(u,v)\|<2^{\gamma-r_k}\]
		and $|q_1-x|<2^{-r_k}$, so 
		\[(p_1,p_2,q_1)\in B_{2^{\gamma+1-r_k}}(u,v,x)\,.\]
		It therefore follows that
		\begin{align*}
		K^A_{r_k-\gamma-1, r_k}(u,v,x \, | \, x, ax + b )&\leq K(p_1, p_2, q_1)\\
		&\leq \ell(\pi_1\pi_2\pi_3\pi_4\pi_5)+c_M\\
		&\leq \ell(\pi_5) + K(r_1, \ldots, r_k) +K(\ve)+K(\eta)+c_M\\
		&= \ell(\pi_5) + K(\ve)+ K(\eta)+ O(\log r_k)\,.
		\end{align*}
		
		Applying Lemma~\ref{lem:sensitivity} yields
		\begin{equation}\label{eq:computingMain}
		K^A_{r_k}(u,v,x \, | \, x, ax + b ) \leq \ell(\pi_5) + K(\ve)+ K(\eta)+ O(\log r_k).
		\end{equation}
		By our inductive hypothesis, we have that
		\begin{align}\label{eq:lengthPi5}
		\ell(\pi_5) &= K_{r_{k-1}, r_k}(a,b \, | \, x, ax + b) \nonumber \\
		&= K_{r_{k-1}}(a,b \, | \, x, ax + b) + O(\log r_{k-1})\nonumber \\
		&\leq 2^{k-1}\left(K(\ve)+ K(\eta) + \frac{4\ve}{\delta} r_{k-1} + O(\log r_{k-1})\right)\,.
		\end{align}
		
		To complete the proof, we bound $K^A_{r_k}(a, b, x \, | \, u, v, x)$. If $t > r_k$, then 
		\[K^A_{r_k}(a, b, x \, | \, u, v, x) \leq \log(r_k)\,.\]
		Otherwise, when $t\leq r_k$, by our construction of $M$ and Lemma~\ref{lem:sensitivity},
		\begin{align*}
		(\eta+\ve)r_k&\geq K(p_1,p_2)\\
		&\geq K_{r_k-\gamma}(u,v)\\
		&\geq K_{r_k}(u,v)-O(\log r_k)\,.
		\end{align*}
		Combining this with condition 3 in the lemma statement and simplifying yields 
		\[r_k-t\leq \frac{2\ve}{\delta}r_k+O(\log r_k)\,.\] 
		
		Therefore, by Lemma~\ref{lem:sensitivity}, we have
		\begin{align}\label{eq:abxFromuvx}
		K_{r_k}(a, b, x \, | \, u, v, x) &\leq 2(r_k - t) + O(\log r_k) \nonumber \\
		&\leq \frac{4\ve}{\delta} r_k + O(\log r_k)\,,
		\end{align}
		for every $t \in \N$.

		Combining inequalities (\ref{eq:computingMain}), (\ref{eq:lengthPi5}) and (\ref{eq:abxFromuvx}) gives
		\begin{align*}
		K_{r_k}(a, b, x \, | \, x, ax + b) &\leq  K_{r_k}(u, v, x \, | \, x, ax + b) + K_{r_k}(a, b, x \, | \, u, v, x)\\
		&\leq  K_{r_k}(u, v, x \, | \, x, ax + b) + \frac{4\ve}{\delta} r_k + O(\log r_k)\\
		&\leq  \ell(\pi_5) + K(\ve)+ K(\eta) + \frac{4\ve}{\delta} r_k + O(\log r_k)\\
		&\leq  2^k \left(K(\ve)+ K(\eta) + \frac{4\ve}{\delta} r_k + O(\log r_k)\right)\,.
		\end{align*}
	\end{proof}
	
	\subsection{Decreasing Complexity Using an Oracle}\label{ssec:oracles}
	
	Given an oracle $D\subseteq\N$, $r\in\N$, $z\in\R^n$, and $\eta\in\Q\cap[0,\dim^D(z)]$, let $A^D(r,z,\eta)$ be the oracle guaranteed by applying Lemma~\ref{lem:oracles} relative to $D$. For oracles, $A,B\subseteq\N$, let $\langle A,B\rangle\subseteq\N$ be an oracle that combines $A$ and $B$ by interleaving. Note that we treat $k$ as a constant for the purposes of asymptotic notation.
	{\renewcommand{\thethm}{\ref{lem:oraclesSpectra}}
		\begin{lem}
			Let $z\in\R^n$, $\eta\in\Q\cap[0,\dim(z)]$, and $k\in\N$. For all $r_1, \ldots, r_k \in \N$, there is an oracle $D=D(r_1,\ldots, r_k,z,\eta)$ such that
			\begin{enumerate}
				\item For every $t \leq r_1$,  $K^D_t(z) =\min\{\eta r_1,K_t(z)\}+ O(\log r_k)$
				\item For every $1 \leq i \leq k$, \[K^D_{r_i}(z) = \eta r_1 + \sum_{l =2}^i \min\{\eta (r_l - r_{l-1}), K_{r_l, r_{l-1}}(z \, | \, z)\} + O(\log r_k)\,.\]
				\item For every $t\in\N$ and $x\in\R$, $K^{z,D}_t(x) = K^z_t(x) + O(\log r_k)$.
			\end{enumerate}
		\end{lem}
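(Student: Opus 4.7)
The plan is to construct $D$ by $k$ nested applications of Lemma~\ref{lem:oracles}, proceeding by induction on $k$. The base case $k=1$ is handled directly by $D = A(r_1,z,\eta)$: property~1 is immediate from Lemma~\ref{lem:oracles} property~1, property~3 from Lemma~\ref{lem:oracles} property~3, and property~2 at $i=1$ (an empty sum) reduces to property~1 evaluated at $t=r_1$. For the inductive step, I assume $D' = D(r_1,\ldots,r_{k-1},z,\eta)$ has already been constructed and satisfies the three conclusions. Writing $T_i = \eta r_1 + \sum_{l=2}^i\min\{\eta(r_l-r_{l-1}),\,K_{r_l,r_{l-1}}(z\,|\,z)\}$, I pick a rational $\eta_k \leq \eta$ with $\eta_k r_k = T_k$ and set $D = \langle D',\,A^{D'}(r_k,z,\eta_k)\rangle$, using the notation established just before the lemma statement. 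A short separate verification shows that $\eta_k \in [0,\dim^{D'}(z)]$, so the invocation of Lemma~\ref{lem:oracles} relative to $D'$ is legal.

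To verify the new conclusions, observe first that $\eta_k r_k = T_k \geq T_i$ for every $i<k$, and likewise $\eta_k r_k \geq \eta r_1 \geq K^{D'}_t(z)$ for every $t\leq r_1$. Lemma~\ref{lem:oracles} property~1 applied relative to $D'$ therefore shows that the cap at $\eta_k r_k$ does not activate at these lower precisions, so properties~1 and~2 at indices $i<k$ transfer from $D'$ to $D$ with only an extra $O(\log r_k)$ error. For property~2 at $i=k$, the same application of property~1 gives $K^D_{r_k}(z) = \min\{\eta_k r_k,\,K^{D'}_{r_k}(z)\} + O(\log r_k)$, and the chain rule (Lemma~\ref{lem:chain} relativized to $D'$) combined with the inductive value $K^{D'}_{r_{k-1}}(z) = T_{k-1}$ expresses $K^{D'}_{r_k}(z)$ as $T_{k-1} + K^{D'}_{r_k,r_{k-1}}(z\,|\,z) + O(\log r_k)$. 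The crucial intermediate claim is that $K^{D'}_{r_k,r_{k-1}}(z\,|\,z) = K_{r_k,r_{k-1}}(z\,|\,z) + O(\log r_k)$; granted this, $\eta_k r_k = T_k \leq T_{k-1} + K_{r_k,r_{k-1}}(z\,|\,z)$ forces the cap to activate and yields $K^D_{r_k}(z) = T_k + O(\log r_k)$ as required. Property~3 follows by iterating Lemma~\ref{lem:oracles} property~3 across the $k$ layers of the construction; since $k$ is treated as a constant, the cumulative error remains $O(\log r_k)$.

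The main technical obstacle is the intermediate claim that the cascade of oracles assembled into $D'$ does not reduce the incremental Kolmogorov complexity of $z$ between precisions $r_{k-1}$ and $r_k$. The upper bound $K^{D'}_{r_k,r_{k-1}}(z\,|\,z) \leq K_{r_k,r_{k-1}}(z\,|\,z)$ is immediate, since oracles can only decrease complexity. For the lower bound I would proceed by a sub-induction on the layers of $D'$: at each stage $j$, Lemma~\ref{lem:oracles} property~2 applied to the $j$-th oracle relative to the previous combined oracle gives $K^{D_j}_t(z) \geq \eta_j r_j + K^{D_{j-1}}_{t,r_j}(z\,|\,z) + O(\log r_j)$ for $t>r_j$, and combining this with the chain rule and the stagewise identity $K^{D_j}_{r_j}(z) = T_j$ forces $K^{D_j}_{r_k,r_{k-1}}(z\,|\,z) \geq K^{D_{j-1}}_{r_k,r_{k-1}}(z\,|\,z) - O(\log r_k)$. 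Together with the trivial inequality in the other direction, this gives stagewise equality, and chaining back to $D_0 = \emptyset$ produces the claim. Carrying out these chain-rule manipulations carefully at each layer, while keeping the accumulated $O(\log r_k)$ errors under control, is the most delicate part of the argument.
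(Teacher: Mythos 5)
Your proposal is correct and follows the same high-level strategy as the paper --- iterate Lemma~\ref{lem:oracles} and induct on $k$ --- but with a genuinely different choice of cap at each level, and this difference matters. The paper defines $D_i$ by reusing the \emph{same} rate $\eta$ at every stage and attaching a new layer $A^{D_{i-1}}(r_i,z,\eta)$ only when $K^{D_{i-1}}_{r_i}(z) \geq \eta r_i$; you instead always attach a new layer but with the calibrated rate $\eta_i = T_i/r_i$, where $T_i = \eta r_1 + \sum_{l=2}^i \min\{\eta(r_l-r_{l-1}), K_{r_l,r_{l-1}}(z\,|\,z)\}$ is the target value in property~2. Your calibration forces the minimum in Lemma~\ref{lem:oracles} property~1 to equal $T_i$ outright, making the inductive step for property~2 essentially mechanical. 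The paper's ``no new layer'' branch, by contrast, silently needs $K_{r_j,r_{j-1}}(z\,|\,z) \leq \eta(r_j-r_{j-1})$ in order to convert $K_{r_j,r_{j-1}}(z\,|\,z)$ into the minimum term; that inequality follows from the branch hypothesis $K^{D_{j-1}}_{r_j}(z) < \eta r_j$ only when $T_{j-1} = \eta r_{j-1}$, and is not forced when an earlier stage's minimum picked the $K$-term, leaving $T_{j-1}$ strictly below $\eta r_{j-1}$. (This does not affect the use of the lemma in Theorem~\ref{thm:spectraMainTheorem}, where $\dim(a,b)=\Dim(a,b)$ guarantees all the minima pick $\eta(r_l-r_{l-1})$ up to $o(r_k)$.) So your variant is not merely different; it closes a case that the paper's inductive step glosses over.

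You also correctly isolate the lower-bound half of $K^{D'}_{r_k,r_{k-1}}(z\,|\,z) = K_{r_k,r_{k-1}}(z\,|\,z) + O(\log r_k)$ as the delicate step; the paper uses exactly this when it replaces $K^{D_{j-1}}_{r_j,r_{j-1}}(z\,|\,z)$ by $K_{r_j,r_{j-1}}(z\,|\,z)$ but offers no justification. Your sub-induction is the right idea. A compact way to execute it: for a single layer $A = A^{D'}(r,z,\eta')$ with $\eta' r \leq K^{D'}_r(z)$, Lemma~\ref{lem:oracles} property~2 (relative to $D'$) together with the chain rule, property~1, and the trivial upper bound $K^{\langle D',A\rangle}_{t,r}(z|z) \leq K^{D'}_{t,r}(z|z)+O(1)$ pin down $K^{\langle D',A\rangle}_t(z) = \eta' r + K^{D'}_{t,r}(z\,|\,z) + O(\log t)$ for $t>r$; subtracting the same identity at precision $s$ and applying the relativized chain rule for $K^{D'}_{\cdot,r}$ gives $K^{\langle D',A\rangle}_{t,s}(z\,|\,z) = K^{D'}_{t,s}(z\,|\,z) + O(\log t)$ for all $t>s\geq r$. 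Since every cap in $D'$ sits at a precision $r_i \leq r_{k-1}$, iterating this over the layers yields the claim with total error $O(\log r_k)$, $k$ being constant. With that filled in, your proof is complete.
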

		\addtocounter{thm}{-1}}
	\begin{proof}
		We define the sequence of oracles recursively. Let $D_1=A(r_1,z,\eta)$, as defined in Lemma~\ref{lem:oracles}, and for every $1<i\leq k$, let
		\[D_i=\left\{\begin{array}{ll}
		D_{i-1}&\text{if }K_{r_i}^{D_{i-1}}(z)<\eta r_i\\
		\langle D_{i-1},A^{D_{i-1}}(r_i,z,\eta)\rangle&\text{otherwise}\,.
		\end{array}\right.\]
		Notice that, for every $1\leq i\leq k$, $D_{i}$ is a finite oracle, so $\dim^{D_{i}}(z)=\dim(z)$ and $\eta\in[0,\dim^{D_k}(z)]$.
		
		We now show via induction on $k$ that the lemma holds for all $k\in\N$. For $k = 1$, all three properties hold by Lemma~\ref{lem:oracles}. Fix $j>1$, assume the properties hold for $k=j-1$.
		
		We first show that property 1 holds for $k=j$. Let $t \leq r_1$. It follows from the definition of the oracle $D_j$ and Lemma~\ref{lem:oracles}, relative to $D_{j-1}$, that
		\[K_t^{D_j}(z) = \min\{\eta r_j, K_t^{D_{j-1}}(z)\}+O(\log r_j)\,.\]
		By the induction hypothesis,
		$K^{D_{j-1}}_t(z)=\min\{\eta r_1,K_t(z)\} + O(\log r_{j-1})$. Thus,
		\begin{align*}
		K_t^{D_j}(z)&=\min\{\eta r_j, \min\{\eta r_1,K_t(z)\}+O(\log r_{j-1})\}+O(\log r_j)
		\\&=\min\{\eta r_1,K_t(z)\}+O(\log r_j)\,.
		\end{align*}
		
		We now show the property 2 holds for $k=j$. Suppose that $i<j$. Then by the definition of $D_j$,
		\[K_{r_i}^{D_j}(z)=\min\{\eta r_j,K_{r_i}^{D_{j-1}}(z)\}+O(\log r_j)\,,\]
		and by the induction hypothesis,
		\[K_{r_i}^{D_{j-1}}(z) = \eta r_1 + \sum_{l =2}^i \min\{\eta (r_l - r_{l-1}), K_{r_l, r_{l-1}}(z \, | \, z)\} + O(\log r_{j-1})\,.\]
		Since
		\[\eta r_1 + \sum_{l =2}^i \min\{\eta (r_l - r_{l-1}), K_{r_l, r_{l-1}}(z \, | \, z)\} \leq \eta r_i\,,\]
		we have 		
		\begin{equation*}
		K^{D_j}_{r_i}(z) = \eta r_1 + \sum_{l =2}^i \min\{\eta (r_l - r_{l-1}), K_{r_l, r_{l-1}}(z \, | \, z)\} + O(\log r_j)
		\end{equation*}
		Now suppose that $i = j$. If $K_{r_j}^{D_{j-1}}(z)<\eta r_j$, then, by our induction hypothesis and Lemma \ref{lem:oracles},
		\begin{align*}
		K_{r_i}^{D_j}(z)=&K_{r_i}^{D_{j-1}}(z)
		\\=& K_{r_{i-1}}^{D_{j-1}}(z) + K^{D_{j-1}}_{r_i, r_{i-1}}(z \, | \, z) - O(\log r_{j})
		\\=& \eta r_1 + \sum_{l =2}^{i-1} \min\{\eta (r_l - r_{l-1}), K_{r_l, r_{l-1}}(z \, | \, z)\} + O(\log r_{j}) \\ 
		&\; + K_{r_i, r_{i-1}}(z \, | \, z) + O(\log r_{j-1}\\
		=& \eta r_1 + \sum_{l =2}^{i} \min\{\eta (r_l - r_{l-1}), K_{r_l, r_{l-1}}(z \, | \, z)\} + O(\log r_{j}) \,.
		\end{align*}
		If instead $K_{r_i}^{D_{j-1}}(z)\geq \eta r_i$, then $K_{r_i}^{D_{j}}(z) = \eta r_i-O(\log r_i)$ by Lemma~\ref{lem:oracles}, relative to $D_{j-1}$. Since $K_{r_i}^{D_{j-1}}(z)\geq \eta r_i$ implies that $K_{r_i, r_{i-1}}(z \, | \, z) \geq \eta (r_i - r_{i-1})$, 
		\begin{center}
			$K^{D_j}_{r_i}(z) = \eta r_1 + \sum_{l =2}^i \min\{\eta (r_l - r_{l-1}), K_{r_l, r_{l-1}}(z \, | \, z)\} + O(\log r_i)$
		\end{center}				
		Therefore property 2 holds for all $1 \leq i \leq k$.
		
		To complete the proof we show that property 3 is satisfied for $k=j$. Let $t \in \N$ and $y \in \R^m$. By Lemma~\ref{lem:oracles}, relativized to $D_{j-1}$, and our induction hypothesis,
		\begin{align*}
		K^{z, D_j}_{t}(y) &= K^{z, D_{j-1}}_t(y)+O(\log r_j)\\
		&= K^{z}_t(y) + O(\log r_{j-1})+O(\log r_j)\\
		&= K^z_t(y) + O(\log r_j)\,.
		\end{align*}
		Thus, by mathematical induction, the lemma holds for all $k\in\N$.
	\end{proof}
	
	\subsection{Proof of Second Main Theorem}
	{\renewcommand{\thethm}{\ref{thm:spectraMainThm2}}
		\begin{thm}
			Let $a, b \in\R$ such that $\dim(a, b) \geq 1$. Then for every $s \in [\frac{1}{2}, 1]$ there is a point $x\in\R$ such that  $\dim(x, ax + b) \in [\frac{3}{2} + s - \frac{1}{2s}, s + 1]$.
		\end{thm}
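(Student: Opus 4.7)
The proof plan parallels that of Theorem~\ref{thm:spectraMainTheorem}. Letting $d = \dim(a,b) \geq 1$, I construct $x$ by the same padding scheme: pick $y$ random relative to $(a,b)$, choose a rapidly growing sequence $\{h_j\}$ with $h_j \geq 2^{h_{j-1}}$ and $K_{h_j}(a,b) \leq (d + 1/j)h_j$, and let $x$ agree with $y$ except on the intervals $(sh_j, h_j]$, where $x$ is set to zero. As in Theorem~\ref{thm:spectraMainTheorem}, this yields $K_{h_j}(x) = sh_j + O(\log h_j)$, hence $\dim(x) = s$.

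For the upper bound, the chain rule applied at precision $h_j$ gives
\[K_{h_j}(x,ax+b) \leq K_{h_j}(x) + K_{h_j}(ax+b \mid x) + O(\log h_j) \leq sh_j + h_j + O(\log h_j)\,,\]
using the trivial bound $K_{h_j}(ax+b \mid x) \leq h_j + O(\log h_j)$, so $\dim(x,ax+b) \leq s+1$.

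For the lower bound, I invoke Lemma~\ref{lem:pointInduct} at two precisions $r_1 < r_2 = h_j$, with the ratio $\rho = r_1/r_2$ to be optimized. Without the hypothesis $\dim(a,b) = \Dim(a,b)$, the estimate $K_{r_i,r_{i-1}}(a,b \mid a,b) \geq \eta(r_i - r_{i-1}) - o(r_i)$ needed to invoke Lemma~\ref{lem:oraclesSpectra} at multiple levels can fail at intermediate precisions; instead I use the weaker Lemma~\ref{lem:oracles} applied at $r_1$ to fix $K^A_{r_1}(a,b) \approx \eta r_1$ for some $\eta \leq d$. Verifying condition 3 of Lemma~\ref{lem:pointInduct} then splits into sub-cases according to whether $r_i - t$ lies in the random or padded portion of $x$, and whether $K_t(a,b)$ has been truncated by the oracle; the basic input to each sub-case is Lemma~\ref{lem:lines}, $K^A_{r_i}(u,v) \geq K^A_t(a,b) + K^{a,b}_{r_i - t}(x) - O(\log r_i)$, combined with the bound $K_t(a,b) \geq t - \ve t$ for large $t$ (using $d \geq 1$).

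The main obstacle is that the allowed $\delta$ in condition 3 is dictated by the weakest sub-case, and the factor $2^k = 4$ appearing in the conclusion of Lemma~\ref{lem:pointInduct} together with the $4\ve/\delta$ factor further erodes the achievable bound. Optimizing $\rho$ and $\eta$ subject to these constraints should produce the value $3/2 + s - 1/(2s)$, which is strictly weaker than $s+1$ for $s \in (1/2,1)$ but, together with the interval structure and the choice $s_n = (2n-1)/(2n)$, already suffices to force the dimension spectrum to be infinite.
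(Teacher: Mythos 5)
Your proposal has a genuine gap in the oracle step, and it stems from a misreading of why the hypothesis $\dim(a,b) = \Dim(a,b)$ was needed in Theorem~\ref{thm:spectraMainTheorem}. You are right that, for arbitrary $(a,b)$, the estimate $K_{r_i,r_{i-1}}(a,b\mid a,b)\geq \eta(r_i-r_{i-1})-o(r_i)$ can fail at generic precisions. But the paper does \emph{not} abandon Lemma~\ref{lem:oraclesSpectra} for this reason; instead it places $r_1$ at the special precision $h_j$, where by construction $K_{h_j}(a,b)\leq(\dim(a,b)+1/j)h_j$ is \emph{already} close to the minimum. With $r_1=h_j$ and $r_2=r\in(h_j,2h_j]$, the conditional-complexity estimate then holds automatically: $K_{r,r_1}(a,b\mid a,b)=K_r(a,b)-K_{r_1}(a,b)-O(\log r)\geq \dim(a,b)(r-r_1)-h_j/j-O(\log r)$, which is $\geq\eta(r-r_1)-o(r)$. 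So Lemma~\ref{lem:oraclesSpectra} still applies with $k=2$, and both $K^D_{r_1}(a,b)$ and $K^D_{r_2}(a,b)$ are pinned near $\eta r_1$ and $\eta r_2$ respectively. Your replacement — applying Lemma~\ref{lem:oracles} only at $r_1$ — yields only a \emph{lower} bound on $K^A_{r_2}(a,b)$ via $K^A_{r_2}(a,b)\geq \eta r_1+K_{r_2,r_1}(a,b\mid a,b)$, and the best available \emph{upper} bound is $K^A_{r_2}(a,b)\leq K_{r_2}(a,b)\approx \dim(a,b)\,r_2$. Verifying condition~2 of Lemma~\ref{lem:pointInduct} at $i=2$ would then force $\eta+\ve\geq\dim(a,b)$; since $\dim(a,b)\geq 1$ this contradicts $\eta<1$, which you need to keep $\delta=1-\eta>0$. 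When $\dim(a,b)>1$ this is flatly impossible, so the single-precision oracle route does not close.

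Two secondary points. First, your padding interval $(sh_j,h_j]$ matches Theorem~\ref{thm:spectraMainTheorem}, but the paper's proof of Theorem~\ref{thm:spectraMainThm2} pads on $(h_j,h_j/s]$, i.e.\ \emph{above} the good precision $h_j$. This placement matters: with $r_1=h_j$, the quantities $r_i-t\leq r_1$ land in the unpadded (random) part of $x$, which is what makes $K_{r_i-t}(x)\geq r_i-t-o(\cdot)$ available in condition~3, and it also yields $K_r(x)\geq sr$ throughout $(h_j,2h_j]$. With your padding placement and $r_2=h_j$, you would additionally need to ensure $r_1<sh_j$ and $r_2-r_1\leq r_1$, which is consistent only when $s\geq 1/2$ but adds further constraints you have not reconciled with the oracle requirement above. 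Second, the constant $\tfrac{3}{2}+s-\tfrac{1}{2s}$ does not emerge from optimizing a ratio $\rho=r_1/r_2$; in the paper it comes from the separate analysis of $r\in(2h_j,h_{j+1}]$, where $K_r(x)=r-h_j/s+h_j+O(\log r)$ is minimized (as a fraction of $r$) at $r=2h_j$, giving $3/2-1/(2s)$, to which $\eta'\to s$ is added via the $K_r(x,ax+b)\geq K_r(x)+\eta'r-o(r)$ bound from the proof of Theorem~\ref{thm:boundingMainThm}. So the correct structure of the argument is not ``optimize $\rho,\eta$ against the $2^k\cdot 4\ve/\delta$ loss'' but rather ``two-level oracle on $(h_j,2h_j]$ plus a direct chain-rule computation on $(2h_j,h_{j+1}]$.''
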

		\addtocounter{thm}{-1}
		\begin{proof}
			
			Let $s \in [\frac{1}{2},1]$ and $y \in \R$ be random relative to $(a, b)$. That is, there is some constant $c \in \N$ such that for all $r \in \N$,
			\begin{center}
				$K^{a, b}_r(y) \geq r - c$.
			\end{center}
			Define sequence of natural numbers $\{h_j\}_{j \in \N}$ inductively as follows. Define $h_0 = 1$. For every $j > 0$, define 
			\begin{equation*}
			h_j = \min\left\{h \geq 2^{h_{j-1}}: K_{h}(a, b) \leq \left(\dim(a, b) + \frac{1}{j}\right)h\right \}.
			\end{equation*}
			Note that $h_j$ always exists. For every $r \in \N$, let
			\begin{align*}
			x[r] = \begin{cases}
			0 &\text{ if } \frac{r}{h_j} \in \left(1, \frac{1}{s}\right] \text{ for some } j \in \N \\
			y[r] &\text{ otherwise}
			\end{cases}
			\end{align*}
			Define $x \in \R$ to be the real number with this binary expansion. Then,
			\[K_{h_j}(x) = h_j + O(\log h_j)\,.\]
			
			We first show that $\dim(x, ax + b) \leq s + 1$. For every $j \in \N$, 
			\begin{align*}
			K_{h_j/s}(x,ax+b)&= K_{h_j/s}(x) + K_{h_j/s}(ax + b \, | \, x) + O(\log h_j/s) \\
			&= K_{h_j}(x) + K_{h_j/s}(ax + b \, | \, x) + O(\log h_j) \\
			&\leq h_j + 1\cdot h_j/s + o(h_j).
			\end{align*}
			Therefore, 
			\begin{align*}
			\dim(x, ax + b) &= \liminf_{r \to \infty} \frac{K_r(x, ax + b)}{r}\\
			&\leq \liminf_{j \to\infty} \frac{K_{h_j/s}(x, ax + b)}{h_j/s}\\
			&\leq \liminf_{j \to \infty} \frac{sh_j + h_j  +o(h_j)}{h_j}\\
			&= s + 1.
			\end{align*}
			
			Let $H = \Q \cap (s, 1)$, and $\eta \in H$. Let $\eta^\prime \in \Q \cap (0, s]$, $\delta = 1 - \eta > 0$, and $\ve \in \Q_+$. Let $j \in \N$. We first show that 
			\begin{equation}\label{eq:mainThm2eq}
			K_{r}(x, ax + b) \geq s r + \eta r - c\frac{\ve}{\delta} r - o(r),
			\end{equation}
			for every $r \in (h_j, 2h_j]$. Let $r \in (h_j, 2h_j]$. Let $r_1 = h_j$, $r_2 = r$, and $D_{r} = D(r_1, r_2, (a, b), \eta)$ be the oracle defined in Lemma \ref{lem:oraclesSpectra}. We first note that, by our construction of $x$, 
			\begin{align*}
			K_{r, r_1}(a, b \, | \, a, b) &= K_r(a, b) - K_r(a, b) + O(\log r)\\
			&\geq K_r(a, b) - \dim(a, b)r_1 - h_j/j + O(\log r)\\
			&\geq \dim(a, b)r - \dim(a, b)r_1 - h_j/j + O(\log r)\\
			&\geq \dim(a, b) (r - r_1) - h_j/j + O(\log r)\\
			&> \eta (r - r_1) - h_j/j + O(\log r).
			\end{align*}
			Hence, by property 2 of Lemma \ref{lem:oraclesSpectra}
			\begin{equation}\label{eq:boundOracle2}
			\eta r - h_j/j - O(\log r) \leq K^{D_r}_r(a, b) \leq \eta r + O(\log r).
			\end{equation}
			
			We now show that the conditions of Lemma~\ref{lem:pointInduct} are satisfied. By Lemma~\ref{lem:oraclesSpectra}, for each $i \in \{1, 2\}$,
			\begin{equation*}
			K^{D_r}_{r_i}(a, b) \leq \eta r_i + O(\log r_2)\,.
			\end{equation*}
			Hence, condition 2 of Lemma \ref{lem:pointInduct} is satisfied. 
			
			To see that condition 3 is satisfied for $i = 1$, let $(u, v) \in B_1(a, b)$ such that $ux + v = ax + b$ and $t=-\log\|(a,b)-(u,v)\| \leq r_1$. Then, by Lemmas \ref{lem:lines} and \ref{lem:oraclesSpectra}, and our construction of $x$,
			\begin{align*}
			K^{D_{r}}_{r_1}(u,v) &\geq K^{D_{r}}_t(a,b) + K^{D_{r}}_{r_1-t,r_1}(x|a,b)-O(\log r_1)\\
			&\geq \min\{\eta r_1, K_t(a, b)\} + K_{r_1-t}(x)- o(r_k)\\
			&\geq \min\{\eta r_1, \dim(a, b) t - o(t)\} + (\eta + \delta)(r_1 - t) - o(r_k)\\ 
			&\geq \min\{\eta r_1, \eta t - o(t)\} + (\eta + \delta)(r_1 - t) - o(r_k) \\ 
			&\geq \eta t - o(t) + (\eta + \delta)(r_1 - t) - o(r_k)\\
			&\geq (\eta - \ve)r_1 + \delta(r_1 - t)
			\end{align*}
			for all sufficiently large $j$.
			
			To see that that condition 3 is satisfied for $i = 2$, let $(u, v) \in B_{2^{-r_{1}}}(a, b)$ such that $ux + v = ax + b$ and $t=-\log\|(a,b)-(u,v)\| \leq r_2$. Since $(u, v) \in B_{2^{-r_{1}}}(a, b)$,
			\begin{align*}
			r_2 - t &\leq r_2 - r_{1}\\
			&\leq 2r_1 - r_1  \\
			&= r_1.
			\end{align*}
			Therefore, by Lemmas \ref{lem:lines} and \ref{lem:oraclesSpectra}, inequality (\ref{eq:boundOracle2}) and our construction of $x$,
			\begin{align*}
			K^{D_{r}}_{r_2}(u,v) &\geq K^{D_{r}}_t(a,b) + K^{D_{r}}_{r_2-t,r_2}(x|a,b)-O(\log r_2)\\
			&\geq \min\{\eta r_2, K_t(a, b)\} + K_{r_2-t}(x)- o(r_2)\\
			&\geq \min\{\eta r_2, \eta t - h_j/j - o(t)\} + (\eta + \delta)(r_2 - t) - o(r_2)\\ 
			&\geq \eta t - h_j/j - o(t) + (\eta + \delta)(r_2 - t) - o(r_2)\\
			&= \eta r_2 - h_j/j - o(t) + \delta(r_2 - t) - o(r_2)\\
			&\geq \eta r_2 - r_2/j - o(t) + \delta(r_2 - t) - o(r_2)\\
			&\geq (\eta - \ve)r_2 + \delta(r_2 - t),
			\end{align*}
			for all sufficiently large $j$. Hence the conditions of Lemma \ref{lem:pointInduct} are satisfied, and we have
			\begin{align*}
			K_{r}(x, ax + b) \geq&\ K^{D_{r}}_{r}(x, ax + b) - O(1)\\
			\geq&\ K^{D_{r}}_{r}(a, b, x) - 4\left(K(\ve) + K(\eta) + \frac{4\ve}{\delta} r + O(\log r)\right)\\
			=&\ K^{D_{r}}_{r}(a, b) + K^{D_{r}}_{r}(x \, | \, a, b) \\
			&- 4\left(K(\ve) + K(\eta) + \frac{4\ve}{\delta} r + O(\log r)\right)\\
			\geq&\ sr + \eta r - 4\left(K(\ve) + K(\eta) + \frac{4\ve}{\delta} r + O(\log r)\right).
			\end{align*}
			Hence, for every $r \in (h_j, 2h_j]$,
			\begin{align*}
			K_{r}(x, ax + b) &\geq sr + \eta r - \frac{\alpha\ve}{\delta} r - o(r)\\
			&\geq sr + \eta r - \frac{\alpha\ve}{\delta} r - o(r)
			\end{align*}
			where $\alpha$ is a fixed constant, not depending on $\eta$ and $\ve$.
			
			To complete the proof, it suffices to show that $K_r(x, ax + b) \geq r(\frac{3}{2} + s - \frac{1}{2s} - \ve)$, for every $r \in (2h_j, h_{j+1}]$. Let $r \in (2h_j, h_{j+1}]$. Then by Lemma \ref{lem:chain} and our construction of $x$,
			\begin{align*}
			K_r(x) &= K_{r, h_j/s}(x \, | \, x) + K_{h_j/s}(x) + O(\log r)\\
			&= r - h_j/s + h_j + O(\log r).
			\end{align*}
			The proof of Theorem \ref{thm:boundingMainThm} shows that 
			\begin{align*}
			K_r(x, ax + b) &\geq K_r(x) + \eta^\prime r - o(r)\\
			&\geq r - h_j/s + h_j + \eta^\prime r - o(r)\\
			&\geq r\left(\frac{3}{2} + s - \frac{1}{2s} - \ve\right)
			\end{align*}
			for sufficiently large $j$. 
			
			Since $\eta$, $\eta^\prime$ and $\ve$ were chosen arbitrarily, the conclusion follows.
		\end{proof}
	\end{document}